\begin{document}
\author{David Cerna \inst{1} \and Alexander Leitsch \inst{2}}
\institute{Research Institute for Symbolic
Computation (RISC) \\ Johannes Kepler University, Linz, Austria \\ \href{mailto:dcerna@risc.uni-linz.ac.at}{dcerna@risc.uni-linz.ac.at}
\and
Logic and Theory Group\\ Technical University of Vienna \\  \href{mailto: leitsch@logic.at}{ leitsch@logic.at} }

\title{Analysis of Clause set Schema Aided by Automated Theorem Proving: A Case Study}
\subtitle{[Extended Paper]}
\maketitle
\pagestyle{plain}

\begin{abstract}
The {\em schematic CERES method} \cite{CERESS2} is a recently developed method of cut elimination for {\em proof schemata}, that is a sequence of proofs with a recursive construction. Proof schemata can be thought of as a way to circumvent adding an induction rule to the \textbf{LK}-calculus. In this work, we formalize a schematic version of the {\em infinitary pigeonhole principle}, which we call the Non-injectivity Assertion schema (NiA-schema), in the \textbf{LKS}-calculus \cite{CERESS2}, and analyse the clause set schema extracted from the NiA-schema using some of the structure provided by the schematic CERES method. To the best of our knowledge, this is the first application of the constructs built for proof analysis of proof schemata to a mathematical argument since its publication. We discuss the role of {\em Automated Theorem Proving} (ATP) in schematic proof analysis, as well as the shortcomings of the schematic CERES method concerning the formalization of the NiA-schema, namely, the expressive power of the {\em schematic resolution calculus}.  We conclude with a discussion concerning the usage of ATP in schematic proof analysis. 
\end{abstract}

\section{Introduction}\label{sec:Introduction}

In Gentzen's {\em Hauptsatz}~\cite{Gentzen1935}, a sequent calculus for first order logic was introduced, namely, the \textbf{LK}-calculus. He then went on to show that the {\em cut} inference rule is redundant and in doing so, was able to show consistency of the calculus. The method he developed for eliminating cuts from \textbf{LK}-derivations works by inductively reducing the cuts in a given \textbf{LK}-derivation to cuts which either have a reduced {\em formula complexity} and/or reduced {\em  rank}~\cite{prooftheory}. This method of cut elimination is known as {\em reductive cut elimination}. A useful result of cut elimination for the \textbf{LK}-calculus is that cut-free \textbf{LK}-derivations have the {\em subformula property},  i.e. every formula occurring in the derivation is a subformula of some formula in the end sequent. This property allows for the construction of {\em Herbrand sequents} and other objects which are essential in proof analysis. 

However, eliminating cuts from \textbf{LK}-derivations does have its disadvantages, mainly concerning the number of computations steps needed and the size of the final cut-free proof. As pointed out by George Boolos in ``Don't eliminate cut'' ~\cite{Dontelimcut}, sometimes the elimination of cut inference rules from an \textbf{LK}-proof can result in an non-elementary explosion in the size of the proof. Though using cut elimination, it is also possible to gain mathematical knowledge concerning the connection between different proofs of the same theorem. For example, Jean-Yves Girard's application of reductive cut elimination to a variation of F\"{u}rstenberg-Weiss' proof of Van der Waerden's theorem ~\cite{ProocomWaerdens1987} resulted in the {\em analytic} proof of Van der Waerden's theorem as found by Van der Waerden himself. From the work of Girard, it is apparent that interesting results can be derived from eliminating cuts in ``mathematical'' proofs. 

A more recently developed method of cut elimination, the CERES method ~\cite{CERES}, provides the theoretic framework to directly study the cut structure of \textbf{LK}-derivations, and in the process reduces the computational complexity of deriving a cut-free proof. The cut structure is transformed into a clause set allowing for clausal analysis of the resulting clause form. Methods of reducing clause set complexity, such as {\em subsumption} and {\em tautology elimination} can be applied to the characteristic clause set to reduce its complexity. It was shown by Baaz \& Leitsch in ``Methods of cut Elimination''~\cite{Baaz:2013:MC:2509679} that this method of cut elimination has a {\em non-elementary speed up} over reductive cut elimination.

In the same spirit of Girard's work, Baaz et al. ~\cite{Baaz:2008:CAF:1401273.1401552} applied the CERES method to a formalized mathematical proof. At the time of applying the method to F\"{u}rstenberg's proof of the infinitude of primes, the CERES method had been generalized to {\em higher-order logic} ~\cite{CERESHIGH} and an attempt was made to apply this generalized method to to the formal version of F\"{u}rstenberg's proof. However, the tremendous complexity of the higher-order clause set \footnote{The individual clauses of the clause set were very large, some containing over 12 literals, and contained both higher order and first order free variables. Interactive theorem provers could not handle these clause sets, nor could a human adequately parse the clause set.} suggested the use of an alternative method. Instead of formalizing the proof as a single higher-order proof,  formalize it as a sequence of first-order proofs enumerated by a single numeric parameter, of which indexes the number of primes assumed to exists. The resulting schema of clause sets was refuted by a resolution schema resulting in Euclid's argument for prime construction. The resulting specification was produced on the mathematical meta-level.  At that time no object-level construction of the refutation schema existed. 

A mathematical formalizations of  F\"{u}rstenberg's proof requires induction. In the higher-order formalization, induction is easily formalized as part of the formula language. However in first-order, an induction rule needs to be added to the \textbf{LK}-calculus. As it was shown in ~\cite{CERESS2}, Reductive cut elimination does not work in the presence of an induction rule in the \textbf{LK}-calculus. Also, other systems~\cite{Mcdowell97cut-eliminationfor} which provided cut elimination in the presence of an induction rule do so at the loss of some essential properties, for example the subformula property.   

In ``Cut-Elimination and Proof Schemata''~\cite{CERESS2}, a version of the \textbf{LK}-calculus was introduced (\textbf{LKS}-calculus) allowing for the formalization of sequences of proofs as a single object level construction, i.e. the {\em proof schema}, as well as a framework for performing cut elimination on proof schemata. Cut elimination performed within the framework of  ~\cite{CERESS2} results in cut-free proof schemata with the subformula property. Essentially, the concepts found in ~\cite{CERES} were generalized to handle recursively defined proofs. It was shown in ~\cite{CERESS2} that {\em schematic} characteristic clause sets are always unsatisfiable, but it is not known whether a given schematic characteristic clause set will have a refutation expressible within the language provided for the resolution refutation schema. This gap distinguishes the schematic version of the CERES method from the previously developed versions. 

In this work, we continue the tradition outlined above of providing a case study of an application of a ``new'' method of cut elimination to a mathematical proof. Though our example is relatively less grand than the previously chosen proof it gives an example of a particularly hard single parameter induction. We chose the {\em tape proof}, found in  ~\cite{TAPEPROOFNOEQ,tapeproofpaper,TapeproofEX2}, and generalize it by considering a codomain of size $n$ rather than of size two. A well known variation of our generalization has been heavily studied in literature under the guise of the {\em Pigeonhole Principle} (PHP). Our generalization will be referred to as the {\em Non-injectivity  Assertion} (NiA). Though such a proof seems straight forward to formalize within the \textbf{LKS}-calculus, without a change to the construction used in~\cite{tapeproofpaper}, there was a forced {\em eigenvariable} violation. 

After formalizing the NiA as a proof schema (the NiA-schema) we apply the schematic CERES method. In our attempt to construct an ACNF schema~\cite{CERESS2} we heavily use Automated Theorem Provers (ATP), specifically SPASS ~\cite{SpassProver}, to develop the understanding needed for construction of such a schema. SPASS was used over other theorem provers mainly due to familiarity. How theorem provers were used in our attempt to construct an ACNF schema will be an emphasis of this work. As an end result, we were able to ``mathematically'' express an ACNF schema of the NiA-schema to a great enough extent to produce instances of the ACNF  in the \textbf{LK}-calculus; in a similar way as in the F\"{u}rstenberg's proof analysis~\cite{Baaz:2008:CAF:1401273.1401552}. Though, in our case we have a refutation for every instance (only the first few where found in~\cite{Baaz:2008:CAF:1401273.1401552}) . It remains an open problem whether a more expressive language is needed to express the ACNF of the NiA-schema in the framework of \cite{CERESS2} . We conjecture that ATP will play an important role in resolving this question as well as in future proof analysis using the schematic CERES method. 

The paper is structured as follows: In Sec. \ref{sec:SCERES}, we introduce the \textbf{LKS}-calculus and the essential concepts from~\cite{CERESS2} concerning the schematic clause set analysis. In Sec. \ref{sec:MathNiA} \& \ref{sec:FormNiA}, we formalize the NiA-schema in the \textbf{LKS}-calculus. In Sec. \ref{sec:CCSSE}, we extract the characteristic clause set from the NiA-schema and perform {\em normalization} and tautology elimination. In Sec. \ref{sec:ATPANAL}, we analysis the extracted characteristic clause set with the aid of SPASS. In Sec. \ref{sec:refuteset}, we provide a (``mathematically defined'') ACNF schema of the extracted characteristic clause set.  In Sec. \ref{sec:Conclusion}, we conclude the paper and discuss future work.

\section{The \textbf{LKS}-calculus and Clause set Schema}\label{sec:SCERES}

In this section we introduce the \textbf{LKS}-calculus which will be used to formalize the NiA-schema, and the parts of the schematic CERES method concerned with characteristic clause set extraction. We refrain from introducing the resolution refutation calculus provided in~\cite{CERESS2} because it does not particularly concern the work of this paper. Though we provide a resolution refutation of the characteristic clause set of the NiA-schema, there is a good reason to believe the constructed resolution refutation is outside the expressive power of the current schematic resolution refutation calculus. More specifically, the provided resolution refutation grows as a function of the free parameter $n$ with respect to a constant change in depth, i.e. grows wider faster than it grows deep. For more detail concerning the schematic CERES method, see~\cite{CERESS2}. 

\subsection{Schematic language, proofs, and the \textbf{LKS}-calculus}
The \textbf{LKS}-calculus is based on the \textbf{LK}-calculus constructed by Gentzen~\cite{Gentzen1935}. When one grounds the {\em parameter} indexing an \textbf{LKS}-derivation, the result is an  \textbf{LK}-derivation~\cite{CERESS2}. The term language used is extended to accommodate  the schematic constructs of  \textbf{LKS}-derivations. We work in a two-sorted setting containing a {\em schematic sort} $\omega$ and an {\em individual sort} $\iota$. The schematic sort only contains numerals constructed from the constant $0:\omega$, a monadic function $s(\cdot):\omega \rightarrow \omega$ and a single free variable, the free parameter indexing \textbf{LKS}-derivations, of which we represent using $n$. 

The individual sort is constructed in a similar fashion to the standard first order language~\cite{prooftheory} with the addition of schematic functions. Thus,  $\iota$ contains countably many constant symbols, countably many {\em constant function symbols}, and  {\em defined function symbols}. The constant function symbols are part of the  standard first order language and the defined function symbols are used for schematic terms. Though, defined function symbols can also unroll to numerals and thus can be of type $\omega^n \to \omega$.  The $\iota$ sort also has {\em free} and {\em bound} variables and an additional concept, {\em extra variables}~\cite{CERESS2}. These are variables introduced during the unrolling of defined function ({\em predicate}) symbols. We do not use extra variables in the formalization of the NiA-schema, but they are essential for the refutation of the characteristic clause set. Also important are the {\em schematic variable symbols} which are variables of type $\omega \rightarrow \iota$. Essentially second order variables, though, when evaluated with a {\em ground term} from the $\omega$ sort we treat them as first order variables. Our terms are built inductively using constants and variables as a base.  

Formulae are constructed inductively using countably many {\em constant predicate symbols} (atomic formulae), logical operators $\vee$,$\wedge$,$\rightarrow$,$\neg$,$\forall$, and $\exists$, as well as  {\em defined predicate symbols} which are used to construct schematic formulae. In this work {\em iterated $\bigvee$} is the only defined predicate symbol used, and of which has the following term algebra:
\begin{equation}
\label{eq:one}
\varepsilon_{\vee}= \bigvee_{i=0}^{s(y)} P(i) \equiv \left\lbrace \begin{array}{c}
{\displaystyle \bigvee_{i=0}^{s(y)} P(i) \Rightarrow \bigvee_{i=0}^{y} P(i) \vee P(s(y)) }\\
{\displaystyle \bigvee_{i=0}^{0} P(i) \Rightarrow P(0)}
\end{array}\right. 
\end{equation}
 From the above described term and formulae language we can provide the inference rules of the \textbf{LKE}-calculus, essentially the \textbf{LK}-calculus~\cite{prooftheory} plus an equational theory $\varepsilon$ (in our case $\varepsilon_{\vee}$ Eq. \ref{eq:one}). This theory, concerning our particular usage, is a primitive recursive term algebra describing the structure of the defined function(predicate) symbols. The \textbf{LKE}-calculus is the base calculus for the \textbf{LKS}-calculus which also includes {\em proof links} which will be described shortly.
\begin{definition}[$\varepsilon$-inference rule]

\begin{prooftree}
\AxiomC{$S\left[ t\right] $}
\RightLabel{$(\varepsilon)$}
\UnaryInfC{$S\left[ t'\right] $}
\end{prooftree}
In the $\varepsilon$ inference rule, the term $t$ in the sequent $S$ is replaced by a term $t'$ such that, given the equational theory  $\varepsilon$,  $\varepsilon \models t = t'$.
\end{definition}

To extend the \textbf{LKE}-calculus with  proof links we need a countably infinite set of {\em proof symbols}  denoted by $\varphi, \psi,\varphi_{i}, \psi_{j} \ldots$. Let $S(\bar{x})$ by a  sequent with schematic variables $\bar{x}$, then by the sequent $S(\bar{t})$ we use to denote the sequent $S(\bar{x})$ where each of the variables in $\bar{x}$ is replaced by the terms in the vector $\bar{t}$ respectively, assuming that they have the appropriate type. Let $\varphi$ be a proof symbol and $S(\bar{x})$ a sequent, then the expression \AxiomC{$(\varphi(\bar{t}))$}
\dashedLine
\UnaryInfC{$S(\bar{t})$}
\DisplayProof
is called a {\em proof link} . For a variable $n:\omega$, proof links
such that the only arithmetic variable is $n$ are called {\em $n$-proof links} \index{k-proof Link}.

\begin{definition}[\textbf{LKE}-calculus ~\cite{CERESS2}]
The sequent calculus $\mathbf{LKS}$
consists of the rules of $\mathbf{LKE}$, where proof links may appear
at the leaves of a proof.
\end{definition}

\begin{definition}[Proof schemata ~\cite{CERESS2}]\label{def.proofschema}
\index{Proof Schemata}
  Let $\psi$ be a proof symbol and $S(n,\bar{x})$ be a sequent
  such that $n:\omega$. Then a {\em proof schema pair for $\psi$} is a pair of $\mathbf{LKS}$-proofs $(\pi,\nu(k))$ with end-sequents $S(0,\bar{x})$ and $S(k+1,\bar{x})$ respectively such that $\pi$ may not contain proof links and $\nu(k)$ may
  contain only proof links of the form \AxiomC{$(\psi(k,\bar{a}))$}
  \dashedLine
  \UnaryInfC{$S(k,\bar{a})$}
  \DisplayProof 
and we say that it is a proof link to $\psi$. We call $S(n,\bar{x})$ the end sequent of $\psi$ and assume an identification between the formula occurrences in the end sequents of $\pi$ and $\nu(k)$ so that we can speak of occurrences in the end sequent of $\psi$. Finally a proof schema $\Psi$ is a tuple of proof schema pairs for $\psi_1 , \cdots \psi_\alpha$ written as $\left\langle \psi_1 , \cdots \psi_\alpha \right\rangle$, such that the $\mathbf{LKS}$-proofs for $\psi_{\beta}$ may also contain $n$-proof links to $\psi_{\gamma}$ for $1\leq \beta < \gamma\leq \alpha$. We also say that the end sequent of $\psi_1$ is a the end sequent of $\Psi$. 
\end{definition}

We will not dive further into the structure of proof schemata and instead refer the reader to~\cite{CERESS2}. We now introduce the {\em characteristic clause set schema}.

\subsection{Characteristic Clause set Schema}
The construction of the characteristic clause set as described for the CERES method~\cite{CERES} required inductively following the formula occurrences of cut formula ancestors up the proof tree to the leaves. However, in the case of proof schemata, the concept of ancestors and formula occurrence is more complex. A formula occurrence might be an ancestor of a cut formula in one recursive call and in another it might not. Additional machinery is necessary to extract the characteristic clause term from proof schemata. A set $\Omega$ of formula occurrences from the end-sequent of an LKS-proof $\pi$ is called {\em a configuration for $\pi$}. A configuration $\Omega$ for $\pi$ is called relevant w.r.t. a proof schema $\Psi$ if $\pi$ is a proof in $\Psi$ and there is a $\gamma \in \mathbb{N}$ such that $\pi$ induces a subproof $\pi$ of $\Psi \downarrow \gamma$
such that the occurrences in $\Omega$ correspond to cut-ancestors below $\pi$~\cite{thesis2012Tsvetan}. Note that the set of relevant cut-configurations can be computed given a proof schema $\Psi$. To represent a proof symbol $\varphi$ and configuration $\Omega$ pairing in a clause set we assign them a {\em clause set symbol} $cl^{\varphi,\Omega}(a,\bar{x})$, where $a$ is an  arithmetic term. 

\begin{definition}[Characteristic clause term ~\cite{CERESS2}]\label{def:charterm}
\index{Characteristic Term}
Let $\pi$ be an $\mathbf{LKS}$-proof and $\Omega$ a configuration. In the following, by $\Gamma_{\Omega}$ , $\Delta_{\Omega}$ and $\Gamma_{C}$ , $\Delta_{C}$ we will denote multisets of formulas of $\Omega$- and $cut$-ancestors respectively. Let $r$ be an inference in $\pi$. We define the clause-set term $\Theta_r^{\pi,\Omega}$ inductively:
\begin{itemize}
\item if $r$ is an axiom of the form $\Gamma_{\Omega} ,\Gamma_C , \Gamma \vdash \Delta_{\Omega} ,\Delta_C , \Delta$, then \\ $\Theta_{r}^{\pi,\Omega} = \left\lbrace \Gamma_{\Omega} ,\Gamma_C  \vdash \Delta_{\Omega} ,\Delta_C \right\rbrace $
\item if $r$ is a proof link of the form
\AxiomC{$\psi(a,\bar{u})$}
\dashedLine
\UnaryInfC{$\Gamma_{\Omega} ,\Gamma_C , \Gamma \vdash \Delta_{\Omega} ,\Delta_C , \Delta$}
\DisplayProof
then define $\Omega'$ as the set of formula occurrences from $\Gamma_{\Omega} ,\Gamma_C  \vdash \Delta_{\Omega} ,\Delta_C$ and $\Theta_{r}^{\pi,\Omega} = cl^{\psi,\Omega}(a,\bar{u})$
\item if $r$ is a unary rule with immediate predecessor \index{Predecessor} $r'$ , then $\Theta_{r}^{\pi,\Omega} =  \Theta_{r'}^{\pi,\Omega}$

\item if $r$ is a binary rule with immediate predecessors $r_1 $, $r_2 $, then 
\begin{itemize}
\item if the auxiliary formulas of $r$ are $\Omega$- or $cut$-ancestors, then
$\Theta_{r}^{\pi,\Omega} = \Theta_{r_1}^{\pi,\Omega} \oplus \Theta_{r_2}^{\pi,\Omega}$
\item otherwise, $\Theta_{r}^{\pi,\Omega} = \Theta_{r_1}^{\pi,\Omega} \otimes \Theta_{r_2}^{\pi,\Omega}$
\end{itemize}
\end{itemize}
Finally, define $\Theta^{\pi,\Omega} = \Theta_{r_0}^{\pi,\Omega}$ where $r_0$ is the last inference in $\pi$ and $\Theta^{\pi} = \Theta^{\pi,\emptyset}$. We call $\Theta^{\pi}$ the characteristic term of $\pi$. 
\end{definition}
Clause terms evaluate to sets of clauses by $|\Theta| = \Theta$ for clause sets $\Theta$, $|\Theta_1 \oplus \Theta_2| = |\Theta_1| \cup |\Theta_2$, $|\Theta_1 \otimes \Theta_2| = \{C \circ D \mid C \in |\Theta_1|, D \in |\Theta_2|\}$.

The characteristic clause term is extracted for each proof symbol in a given proof schema $\Psi$, and together they make the characteristic clause set schema for $\Psi$, $CL(\Psi)$.

\section{``Mathematical'' proof of the NiA Statement}\label{sec:MathNiA}

In this section we provide a mathematical proof of the NiA statement (Thm. \ref{thm:finalpart}). The proof is very close in structure to the formal proof written in the \textbf{LKS}-calculus, which can be found in Sec. \ref{sec:FormNiA}. We skip the basic structure of the proof and outline the structure emphasising the cuts. We will refer to the interval $\left\lbrace 0, \cdots, n-1 \right\rbrace $ as $\mathbb{N}_{n}$. Let $rr_{f}(n)$ be the following sentence, for $n\geq 2$: There exists $p,q \in \mathbb{N}$ such that  $p < q$ and  $f(p) = f(q)$,  or for all $x \in \mathbb{N}$ there exists a $y \in \mathbb{N}$ such that $x\leq y$ and $f(y)\in \mathbb{N}_{n-1}$.

\begin{lemma}
\label{lem:Inducbase}
Let $f:\mathbb{N} \rightarrow \mathbb{N}_{n}$, where $n\in \mathbb{N}$, be total, then  $rr_{f}(n)$ or there exists $p,q \in \mathbb{N}$ such that  $p < q$ and  $f(p) = f(q)$.
\end{lemma}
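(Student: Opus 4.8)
The plan is to unwind the Boolean structure of the goal and then settle it with a single case distinction. Write $A$ for the collision statement ``there exist $p,q \in \mathbb{N}$ with $p<q$ and $f(p)=f(q)$'' and $B$ for ``for all $x \in \mathbb{N}$ there is a $y \in \mathbb{N}$ with $x \leq y$ and $f(y) \in \mathbb{N}_{n-1}$''. By definition $rr_f(n)$ is $A \vee B$, so the conclusion $rr_f(n) \vee A$ is $(A \vee B) \vee A$, which is equivalent to $A \vee B$. Hence it suffices to prove $rr_f(n)$, and I would do this by splitting on whether $B$ holds.

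If $B$ holds there is nothing further to prove, since $B$ is literally a disjunct of $rr_f(n)$. So the only case requiring work is $\neg B$. First I would write out the negation explicitly: $\neg B$ states that there is a bound $x \in \mathbb{N}$ such that for every $y \geq x$ we have $f(y) \notin \mathbb{N}_{n-1}$. Here I would invoke the two hypotheses on $f$: totality guarantees that $f(y)$ is defined for each such $y$, and the codomain constraint gives $f(y) \in \mathbb{N}_n$. Since $\mathbb{N}_n = \mathbb{N}_{n-1} \cup \{n-1\}$, the two facts $f(y) \in \mathbb{N}_n$ and $f(y) \notin \mathbb{N}_{n-1}$ force $f(y) = n-1$. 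Thus under $\neg B$ the function is eventually constant: $f(y) = n-1$ for all $y \geq x$.

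The collision then comes for free. Taking $p = x$ and $q = x+1$ we have $p < q$ and, since both $x$ and $x+1$ are $\geq x$, also $f(p) = f(q) = n-1$; hence $A$ holds. Combining the two cases gives $rr_f(n)$, and a fortiori $rr_f(n) \vee A$.

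I do not anticipate a mathematical obstacle: the heart of the argument is the one-line observation that a function into $\mathbb{N}_n$ which eventually avoids $\mathbb{N}_{n-1}$ must be eventually constant at $n-1$ and therefore repeats. The genuine difficulty — and the reason this is isolated as an induction base — lies in rendering it in the \textbf{LKS}-calculus so that it composes with the inductive step: the witness $x$ for $\neg B$ and the point $x+1$ must be introduced without the eigenvariable clash noted in Sec.~\ref{sec:Introduction}, and the identity $\mathbb{N}_n \setminus \mathbb{N}_{n-1} = \{n-1\}$ has to be replayed through the unfolding $\varepsilon_{\vee}$ of Eq.~\ref{eq:one} rather than by informal set reasoning.
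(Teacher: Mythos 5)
Your proof is correct and takes essentially the same route as the paper: the paper's entire proof is the one-line observation that the codomain splits as $\mathbb{N}_{n} = \mathbb{N}_{n-1} \cup \{n-1\}$ (with the degenerate singleton-codomain case handled separately), which is precisely the case distinction you realize by excluded middle on $B$. Your write-up just expands that one-liner, deriving the collision at the adjacent points $x$, $x+1$ once $\neg B$ forces $f$ to be eventually constant at $n-1$.
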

\begin{proof}
We can split the codomain into $\mathbb{N}_{n-1}$ and $\left\lbrace n \right\rbrace$, or the codomain is $\left\lbrace 0 \right\rbrace$.
\end{proof}
\begin{lemma}
\label{lem:inDucstep}
Let $f$ be a function as defined in Lem. \ref{lem:Inducbase} and $2< m\leq n$, then if $rr_{f}(m)$ holds so does $rr_{f}(m-1)$. 
\end{lemma}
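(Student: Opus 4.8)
The plan is to argue by cases on which of the two disjuncts making up $rr_{f}(m)$ is assumed to hold, exploiting the fact that the codomain bound $\mathbb{N}_{m-1}=\{0,\dots,m-2\}$ occurring in $rr_{f}(m)$ splits as $\mathbb{N}_{m-2}\cup\{m-2\}$, where $\mathbb{N}_{m-2}=\{0,\dots,m-3\}$ is exactly the bound appearing in the target $rr_{f}(m-1)$. The hypothesis $2<m\le n$ guarantees $m-1\ge 2$, so that $rr_{f}(m-1)$ is well-formed, and $m-2\ge 1$, so that the single leftover value $m-2$ is a genuine element of the codomain. Throughout I read the second disjunct of $rr_{f}(k)$, namely ``for all $x$ there is $y\ge x$ with $f(y)\in\mathbb{N}_{k-1}$'', as the assertion that $f$ meets $\mathbb{N}_{k-1}$ for arbitrarily large arguments, i.e. cofinally (infinitely) often.

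First I would dispose of the trivial case. Since the first disjunct of $rr_{f}(m)$ and of $rr_{f}(m-1)$ is literally the same statement, $\exists p,q\ (p<q\wedge f(p)=f(q))$, if $rr_{f}(m)$ holds because a collision exists, then $rr_{f}(m-1)$ holds immediately by selecting the same disjunct. Hence I may assume the second disjunct of $rr_{f}(m)$: for every $x$ there is $y\ge x$ with $f(y)\in\mathbb{N}_{m-1}$. Under this assumption I split once more. If it already happens that for every $x$ there is $y\ge x$ with $f(y)\in\mathbb{N}_{m-2}$, then the second disjunct of $rr_{f}(m-1)$ holds and we are done.

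The remaining case is the substantive one. Here the second disjunct of $rr_{f}(m-1)$ fails, so there is a threshold $x_{0}$ with $f(y)\ge m-2$ for all $y\ge x_{0}$. Applying the assumed second disjunct of $rr_{f}(m)$ at $x_{0}$ produces some $y_{1}\ge x_{0}$ with $f(y_{1})\in\mathbb{N}_{m-1}$, i.e. $f(y_{1})\le m-2$; together with $f(y_{1})\ge m-2$ this forces $f(y_{1})=m-2$. Applying the same disjunct once more at $y_{1}+1$ yields $y_{2}>y_{1}$ for which, by the identical reasoning, $f(y_{2})=m-2$. Thus $p=y_{1}<q=y_{2}$ satisfy $f(p)=f(q)$, establishing the first disjunct of $rr_{f}(m-1)$ and completing the case analysis.

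The main obstacle, and the step the eventual \textbf{LKS}-formalisation must handle with care, is precisely this last case: converting ``$f$ meets $\mathbb{N}_{m-1}$ cofinally often but meets $\mathbb{N}_{m-2}$ only finitely often'' into an \emph{explicit} pair of witnesses $p<q$ with $f(p)=f(q)$. Classically this is the one-value instance of the infinite pigeonhole principle, but producing the two witnesses constructively (rather than merely asserting their existence), while threading the threshold $x_{0}$ through the recursion on the parameter, is where the induction and the eigenvariable bookkeeping become delicate, matching the difficulty flagged in the introduction.
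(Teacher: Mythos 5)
Your proof is correct and takes essentially the same route as the paper: the paper's one-sentence proof (``apply the steps of Lem.~\ref{lem:Inducbase} to the right side of the \emph{or} in $rr_{f}(m)$'') is exactly your case analysis, splitting $\mathbb{N}_{m-1}$ into $\mathbb{N}_{m-2}\cup\{m-2\}$ and, in the leftover case, extracting two witnesses with $f(p)=f(q)=m-2$ to obtain the collision disjunct. You have merely made explicit the details the paper leaves implicit.
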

\begin{proof}
Apply the steps of Lem. \ref{lem:Inducbase} to the right side of the {\em or} in $rr_{f}(m)$.
\end{proof}
\begin{theorem}
\label {thm:finalpart}
Let $f$ be a function as defined in Lem. \ref{lem:Inducbase} , then there exists $i,j \in \mathbb{N}$ such that $i<j$ and $f(i) = f(j)$.
\end{theorem}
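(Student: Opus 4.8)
The plan is to combine the two lemmas by a descending induction on the codomain-size parameter, driving it down to the base value $m = 2$, at which point a collision becomes unavoidable. First I would invoke Lemma~\ref{lem:Inducbase}: since its second disjunct is literally the first disjunct of $rr_{f}(n)$, the lemma amounts to the plain assertion that $rr_{f}(n)$ holds for the total $f:\mathbb{N}\rightarrow\mathbb{N}_{n}$. This gives the starting point of the recursion.

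Next I would iterate Lemma~\ref{lem:inDucstep}. Starting from $rr_{f}(n)$ and applying the step for $m = n, n-1, \ldots, 3$ (each application satisfies $2 < m \leq n$), I obtain the chain $rr_{f}(n) \Rightarrow rr_{f}(n-1) \Rightarrow \cdots \Rightarrow rr_{f}(2)$, so that $rr_{f}(2)$ holds. Each link merely propagates the disjunction ``collision, or the range is still forced to dip below the current bound'', the value $m-2$ being split off at every stage exactly as in the codomain split used in Lemma~\ref{lem:Inducbase}.

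Finally I would extract the desired pair $i < j$ with $f(i)=f(j)$ directly from $rr_{f}(2)$. Its first disjunct already provides such a pair. Its second disjunct states that for every $x\in\mathbb{N}$ there is a $y \geq x$ with $f(y) \in \mathbb{N}_{1} = \{0\}$; instantiating $x = 0$ yields some $y_{0}$ with $f(y_{0}) = 0$, and instantiating $x = y_{0}+1$ yields some $y_{1} > y_{0}$ with $f(y_{1}) = 0$, so that $i = y_{0}$ and $j = y_{1}$ witness the claim. In either case the collision exists, which proves the theorem.

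The hard part will be the bookkeeping hidden inside the phrase ``iterate Lemma~\ref{lem:inDucstep}'': because $n$ is a free schematic parameter, this is not a single deduction but an induction on $n$, and it is precisely this step that the formal development must realise as the recursive part of the proof schema rather than as a finite chain of inferences. One must also check the degenerate boundaries, namely that $n \geq 2$ is required for $rr_{f}(n)$ to be meaningful and that the recursion halts correctly at $m = 2$, where $\mathbb{N}_{1}$ collapses to the singleton $\{0\}$ and thereby forces the repetition. Everything else reduces to the routine codomain case analysis already carried out in Lemmas~\ref{lem:Inducbase} and~\ref{lem:inDucstep}.
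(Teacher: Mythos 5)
Your proposal is correct and follows essentially the same route as the paper's proof: Lemma~\ref{lem:Inducbase} establishes $rr_{f}(n)$ (since its second disjunct is already the first disjunct of $rr_{f}(n)$), chaining Lemma~\ref{lem:inDucstep} descends to $rr_{f}(2)$, and the collision is then extracted from the singleton range $\mathbb{N}_{1}=\{0\}$ by the two instantiations $x=0$ and $x=y_{0}+1$, which is exactly the ``trivial by Lemma~\ref{lem:Inducbase}'' step the paper carries out formally in the proof symbols $\omega(0)$ and $\psi(0)$. Your closing observation that the chaining is not a finite deduction but an induction on the free parameter $n$, to be realised by the recursive structure of the proof schema, is likewise precisely how the paper formalizes the argument in Section~\ref{sec:FormNiA}.
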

\begin{proof}
Chain together the implications of  Lem. \ref{lem:inDucstep} and derive $rr_{f}(2)$, the rest is trivial by   Lem. \ref{lem:Inducbase}.
\end{proof}
This proof makes clear that the number of cuts needed to prove the statement is parametrized by the size of the codomain of the function $f$. The formal proof of the next section outlines more of the basic assumptions being that they are needed for constructing the characteristic clause set. 

\section{NiA formalized in the \textbf{LKS}-calculus}\label{sec:FormNiA}

In this section we provide a formalization of the NiA-schema whose proof schema representation is $\left\langle (\omega(0),\omega(n+1)),(\psi(0),\psi(n+1)) \right\rangle$. Cut-ancestors will be marked with a $^*$ and $\Omega$-ancestors with $^{**}$. Numerals (terms of the $\omega$ sort) will be marked with $\overline{\cdot}$. We will make the following abbreviations: $ EQ_{f} \equiv \exists p \exists q( p < q \wedge   f(p)= f(q))$, $I(\overline{n}) \equiv \forall x \exists y ( x\leq y \wedge \bigvee_{i=\overline{0}}^{\overline{n}} f(y) = \overline{i})$, $I_s(\overline{n}) \equiv \forall x \exists y ( x\leq y \wedge f(y) = \overline{n})$ and $AX_{eq}(\overline{n}) \equiv f(\beta) = \overline{n}^{*} ,f(\alpha) = \overline{n}^{*} \vdash  f(\beta)= f(\alpha)$ (the parts of $AX_{eq}(\overline{n})$ marked as cut ancestors are always cut ancestors in the NiA-schema).
\begin{figure}[H]
\begin{tiny}
\begin{prooftree}
\AxiomC{$\begin{array}{c} \vdash \alpha \leq \alpha^{*} \end{array}$}
\AxiomC{$\begin{array}{c}  f(\alpha) = \overline{0}  \vdash   f(\alpha) = \overline{0}^{*}\end{array}$}
\RightLabel{$\wedge :r$}
\BinaryInfC{$\begin{array}{c}\vdots\\ \forall x  f(x) = \overline{0}  \vdash I(\overline{0})^{*} \end{array}$}
\AxiomC{$\begin{array}{c}  s(\beta) \leq \alpha^{*} \vdash \beta < \alpha\end{array}$}

\AxiomC{$\begin{array}{c}   AX_{eq}(\overline{0})\end{array}$}
\RightLabel{$\wedge:r$}

\BinaryInfC{$\begin{array}{c} \vdots\\ I(\overline{0})^{*}  \vdash EQ_{f}\end{array}$}
\RightLabel{$cut$}
\BinaryInfC{$\begin{array}{c}  \forall x  f(x) = 0 \vdash  EQ_{f}\end{array}$}
\end{prooftree}
\end{tiny}
\caption{Proof symbol $\omega(0)$}
\end{figure}

\begin{figure}[H]
\begin{tiny}
\begin{prooftree}
\AxiomC{$\begin{array}{c}\varphi(\overline{n+1})  \end{array}$}
\dottedLine
\UnaryInfC{$\begin{array}{c}  I(\overline{n+1})^{*} \vdash    EQ_{f} \end{array}$}
\AxiomC{$\begin{array}{c} \vdash  \alpha \leq \alpha^{*} \end{array}$}

\AxiomC{$\begin{array}{c}  \bigvee_{i=\overline{0}}^{\overline{n+1}} f(\alpha) = \overline{i}  \vdash  \bigvee_{i=\overline{0}}^{\overline{n+1}} f(\alpha) = \overline{i}^{*}\end{array}$}
\RightLabel{$\wedge :r$}
\BinaryInfC{$\begin{array}{c}\vdots \\ \forall x \bigvee_{i=\overline{0}}^{\overline{n+1}} f(x) = \overline{i} \vdash   I(\overline{n+1})^{*} \end{array}$}
\RightLabel{$cut$}
\BinaryInfC{$\begin{array}{c}  \forall x \bigvee_{i=\overline{0}}^{\overline{n+1}} f(x) = \overline{i} \vdash    EQ_{f} \end{array}$}
\end{prooftree}

\end{tiny}
\caption{Proof symbol $\omega(n+1)$}
\end{figure}

\begin{figure}[H]
\begin{tiny}
\begin{prooftree}
\AxiomC{$\begin{array}{c}  s(\beta) \leq \alpha^{*} \vdash  \beta < \alpha\end{array}$}
\AxiomC{$\begin{array}{c}   AX_{eq}(\overline{0})\end{array}$}
\RightLabel{$\wedge:r$}

\BinaryInfC{$\begin{array}{c} \vdots\\ I_s(\overline{0})^{*} \vdash    EQ_{f}\end{array}$}
\end{prooftree}
\end{tiny}
\caption{Proof symbol $\psi(0)$}
\end{figure}

\begin{figure}[H]
\begin{tiny}
\begin{prooftree}
\AxiomC{$\begin{array}{c}   max(\alpha,\beta)\leq \gamma^{**}  \vdash \\  \alpha \leq \gamma^{*}  \end{array}$} 
\AxiomC{$\begin{array}{c}    f(\gamma) = \overline{0}^{**}     \vdash \\ f(\gamma) = \overline{0}^{*} ,\end{array}$}
\UnaryInfC{$\begin{array}{c}  \vdots \end{array}$}
\AxiomC{$\begin{array}{c}    f(\gamma) = \overline{n+1}^{**}     \vdash \\ f(\gamma) = n+1^{*} ,\end{array}$}
\BinaryInfC{$\begin{array}{c}\vdots    \end{array}$}
\RightLabel{$\wedge :r$}
\BinaryInfC{$\begin{array}{c}\vdots    \end{array}$}
\AxiomC{$\begin{array}{c}    max(\alpha,\beta)\leq \gamma^{**}  \vdash \\  \beta \leq \gamma^{*}  \end{array}$}
\RightLabel{$\wedge :r$}
\BinaryInfC{$\begin{array}{c} I(\overline{n+1})^{**} \vdash     I(\overline{n})^{*},    I_{s}(\overline{n+1})^{*} \\ \vdots \end{array}$}
\end{prooftree}
\end{tiny}

\begin{tiny}
\begin{prooftree}
\AxiomC{$\begin{array}{c}\vdots \\ I(\overline{n+1})^{**} \vdash     I(\overline{n})^{*},    I_{s}(\overline{n+1})^{*}\end{array}$}
\AxiomC{$\varphi(\overline{n})$}
\dottedLine
\UnaryInfC{$\begin{array}{c}  I(\overline{n})^{*}\vdash     EQ_{f}\end{array}$}
\RightLabel{$cut$}
\BinaryInfC{$\begin{array}{c}   I(\overline{n+1})^{**} \vdash   EQ_{f}, I_{s}(\overline{n+1})^{*}\\\vdots\end{array}$}
\end{prooftree}
\end{tiny}

\begin{tiny}
\begin{prooftree}
\AxiomC{$\begin{array}{c}\vdots \\  I(\overline{n+1})^{**} \vdash   EQ_{f}, I_{s}(\overline{n+1})^{*}\end{array}$}
\AxiomC{$\begin{array}{c}  s(\beta) \leq \alpha^{*} \vdash  \beta < \alpha\end{array}$}
\AxiomC{$\begin{array}{c}   AX_{eq}(\overline{n+1})\end{array}$}
\RightLabel{$\wedge:r$}

\BinaryInfC{$\begin{array}{c} \vdots\\I_{s}(\overline{n+1})^{*}  \vdash EQ_{f} \end{array}$}
\RightLabel{$cut$}
\BinaryInfC{$\begin{array}{c}   I(\overline{n+1})^{**} \vdash    EQ_{f} , EQ_{f} \end{array}$}
\RightLabel{$c:r$}
\UnaryInfC{$\begin{array}{c} I(\overline{n+1})^{**} \vdash    EQ_{f}  \end{array}$}
\end{prooftree}
\end{tiny}
\caption{Proof symbol $\psi(n+1)$}
\end{figure}

\section{Characteristic Clause set Schema Extraction }\label{sec:CCSSE}
The outline of the formal proof provided above highlights the inference rules which directly influence the characteristic clause set schema of the NiA-schema. Also to note are the configurations of the NiA-schema which are relevant, namely, the empty configuration $\emptyset$ and a schema of configurations $\Omega(\overline{n}) \equiv \forall x \exists y ( x\leq y \wedge \bigvee_{i=\overline{0}}^{\overline{n}} f(y) = \overline{i})$. Thus, we have the following:

\begin{subequations}
\label{seq:charclaset}
\begin{equation}
CL_{NiA}(0)\equiv \Theta^{\omega,\emptyset}(0)\equiv \left\lbrace \left( cl^{\psi,\Omega(\overline{0})}(\overline{0})\oplus \vdash \alpha\leq \alpha \right)\oplus \vdash f(\alpha)=\overline{0}  \right\rbrace 
\end{equation}
\begin{equation}
 cl^{\psi,\Omega(\overline{0})}(\overline{0}) \equiv\Theta^{\psi,\Omega(\overline{0})}(0) \equiv\left\lbrace  s(\beta)\leq   \alpha \vdash \otimes f(\alpha)=\overline{0},  f(\beta)=\overline{0}\vdash \right\rbrace 
\end{equation}
\begin{equation}
{\scriptstyle CL_{NiA}(\overline{n+1})\equiv \Theta^{\omega,\emptyset}(\overline{n+1})\equiv \left\lbrace \left( cl^{\psi,\Omega(\overline{n+1})}(\overline{n+1})\oplus \vdash \alpha\leq \alpha \right)\oplus \vdash \bigvee_{i=\overline{0}}^{\overline{n+1}} f(\alpha)=\overline{i}  \right\rbrace 
}\end{equation}
\begin{equation}
\begin{array}{c} {\scriptstyle cl^{\psi,\Omega(\overline{n+1})}(\overline{n+1})\equiv \Theta^{\psi,\Omega(\overline{n+1})}(\overline{n+1})\equiv \left\lbrace \left( cl^{\psi,\Omega(\overline{n})}(\overline{n}) \oplus \left( s(\beta)\leq \alpha \vdash \otimes f(\alpha)= \overline{n+1}, f(\beta)=\overline{n+1}\vdash \right) \right) \right. } \\ {\scriptstyle \left. \oplus \left( max(\alpha,\beta)\leq \gamma \vdash \alpha \leq \gamma \right)  \oplus  \left(max(\alpha,\beta)\leq \gamma \vdash    \beta \leq \gamma \right)  \right\rbrace } \end{array}
\end{equation}
\end{subequations}

In the characteristic clause set schema  $CL_{NiA}(\overline{n+1})$ presented in Eq.\ref{seq:charclaset} tautologies are already eliminated. {\em Evaluation} of $CL_{NiA}(\overline{n+1})$ yields the following clause set $C(n)$:
\[
\begin{array}{l}
(C1)\ \vdash \alpha  \leq  \alpha,\ (C2)\  max(\alpha, \beta) \leq \gamma \vdash \alpha \leq \gamma,\ (C3)\  max(\alpha, \beta)  \leq \gamma \vdash \beta \leq \gamma \\ 
(C4_{0})\  f(\beta)  = \overline{0} ,  f(\alpha)  = \overline{0}  ,   s(\beta) \leq \alpha  \vdash \\ 
\ldots  \ldots\\ 
\ldots \ldots  \\ 
(C4_{n})\ f(\beta)  = \overline{n}  ,  f(\alpha)  = \overline{n} ,   s(\beta)  \leq \alpha \vdash  \\ 
(C5)\ \vdash f(\alpha) = \overline{0} , \cdots , f(\alpha) = \overline{n} 
\end{array}
\]

\section{Clausal Analysis Aided by ATP}\label{sec:ATPANAL}

The result of characteristic clause set extraction for proof schemata is a sequence of clause sets representing the cut structure (See Sec. \ref{sec:CCSSE}), rather than a single clause set representing the cut structure. Thus, unlike applications of the first-order CERES method to formal proofs~\cite{tapeproofpaper}, where a theorem prover is used exclusively to find a refutation, we can only rely on theorem provers for suggestions. Essentially, we need the theorem provers to help with the construction of two elements of the schematic resolution refutation: the induction invariants and the term language. 

For this clause set analysis, we exclusively used SPASS~\cite{SpassProver} in the ``out of the box mode''. We did not see a point to working with the configurations of SPASS being that for sufficiently small instances of $C(n)$ it found a refutation, and our goal was not to find an elegant proof using the theorem prover, but rather a refutation with the aid of the theorem prover; the ``out of the box mode'' was enough for this goal\footnote{Also, using ``out of the box mode'' allows for ease of reproducibility of our results when using the same version of SPASS.}. Though as a side note, refutations found by SPASS were not the smallest, the resolution refutation that SPASS gave as output for $C(4)$\footnote{See Sec. \ref{sec:spassreffour}} used $(C5)$ in the refutation tree 1806 times. The resolution refutation we provide used $(C5)$ only 65 times. Though, it is not that our final refutation is wildly different, SPASS ended up deriving clauses using derived clauses which could easily  be derived from the initial clause set. 

An essential feature we were looking for in the refutations found by SPASS were sequences of clauses which mimic the stepcase construction of the induction axiom, i.e. $\forall x (\varphi(x) \rightarrow \varphi(x+1))$. An example of such a sequence from the refutation of $C(4)$, of which will be the basis of Thm. \ref{thm:refofC}, is as follows:
\begin{figure}
\begin{center}
\begin{tabular}{l}
$1[0:Inp] \ \|\|  \ \Rightarrow \ eq(f(U),3) \ , \ eq(f(U),2) \ , \ eq(f(U),1) \ , \ eq(f(U),0)*$\\
$2795[0:MRR:1.3,2764.0] \ \|\|  \ \Rightarrow \ eq(f(U),3) \ , \ eq(f(U),2) \ , \ eq(f(U),1)$\\
$3015[0:MRR:2795.2,2984.0] \ \|\|  \ \Rightarrow \ eq(f(U),3) \ , \ eq(f(U),2)$\\
$3096[0:MRR:3015.1,3065.0] \ \|\|  \ \Rightarrow \ eq(f(U),3)$
\end{tabular}
\end{center}
\caption{Recursive sequence found in the refutation of $C(4)$.}
\label{fig:seqone}
\end{figure}

Essentially, if we where to interpret the initial clause as defining a function (a function whose domain is the natural numbers and whose codomain\index{Codomain} is the set $\left[0,n \right]$) we see that at first we assume the function has a codomain of size $n$, and than we derive that it cannot have a codomain of size $n$, but rather of size $n-1$, and so on, until we derive that its codomain is empty, contradicting the original assumption, that is that the codomain is non-empty (i.e. clause $(C5)$). This pattern can be found in other instances of the refutation of $C(n)$.

This sequences seems to be an essential part, even the only part, needed to define a recursive refutation of $C(n)$, though if and only if, $C(n)$ is refutable with a {\em total induction}, of which such a refutation has not been found and is unlikely to exists. Something which is not completely apparent in SPASS refutation for $C(n)$, $n<4$, is the gap (in numbering) between clause $1$ and clause $2795$ in Fig. \ref{fig:seqone}. To derive clause $2795$ for clause $1$ in one step we need to first derive the following clause:
    $$2764[0:MRR:2714.1,2749.1] \ \|\| \ eq(f(U),0)* \ \Rightarrow $$
of which deriving is almost as difficult as deriving the sequence of Fig. \ref{fig:seqone}. Essentially to derive clause $2764$, the SPASS refutation eludes to the  need of an inner recursion bounded by the outer recursion. Essentially, we start from a clause of the following form: \\
$\begin{array}{l} 2272[0:Res:955.3,159.1] \ \|\| \ eq(f(U),0)* \ , \ eq(f(V),1)* \ , \  eq(f(W),2)* \ , \ \\ eq(f(X),3)* \ \Rightarrow \end{array}$\\
stating that the codomain is empty and derive that this implies some element $k$ is not in the codomain. Clause $2272$ is essential for Lem. \ref{lem:l13} and is one of the clauses of Lem. \ref{lem:lbase}.

Up to this point we have an idea of the overall structure of the refutation, but so far, we have not discussed the term structure and unifiers used by SPASS. Essentially, how was the recursive max term construction of Def. \ref{def:maxterm} found? Looking at the following two derived clauses from $C(3)$ and $C(4)$ we see that the nesting of the $\max$ term grows with respect to the free parameter:\\
$20[0:Res:15.0,4.0] \ \|\|  \ \Rightarrow \ le(U,max(max(max(V,U),W),X))$\\
$54[0:Res:19.0,4.0] \ \|\|  \ \Rightarrow \ le(U,max(max(max(max(V,U),W),X),Y))$\\
However, in clause $20$ and $54$ the associativity is the opposite of Def. \ref{def:maxterm}. We found that the refutation of Sec. \ref{sec:refuteset} is easier when we switch the association of the max term construction. Also, both clause $20$ and $54$  do not contain successor function ($s(\cdot )$) encapsulation of the variables while Def. \ref{def:maxterm} does. The $s(\cdot )$ terms were added because of the clauses $C4_i$. The literal $s(\alpha) \leq \beta$ enforces the addition of an $s(\cdot )$ term anyway during the unification. This can be see in Lem. \ref{lem:first} and  Cor. \ref{cor:first}, \ref{cor:second}, \& \ref{cor:third}. However, we have not been able to prove the necessity of these max function constructions, nor find a refutation without them. 

The result of all these observations was Lem. \ref{lem:lbase}. After proving that the Lem. \ref{lem:lbase} clause set is indeed derivable from $C(n)$ using resolution, we constructed it to see what the SPASS refutation looked like for $C(4)$. We abbreviate the term $max(max(max(s(x_{0}),s(x_1)),s(x_2)),s(x_3))$ by $m(\bar{x}_{4})$: 
\begin{figure}[H]
$1: eq(f(m(\bar{x}_{4})),2) \vee eq(f(m(\bar{x}_{4})),1) \vee eq(f(m(\bar{x}_{4})),0)$\\
$2:\neg eq(f(x_2),2) \vee eq(f(m(\bar{x}_{4})),1) \vee eq(f(m(\bar{x}_{4})),0)$\\
$3:\neg eq(f(x_1),1) \vee  eq(f(m(\bar{x}_{4})),2) \vee  eq(f(m(\bar{x}_{4})),0)$\\
$4:\neg eq(f(x_0),0) \vee  eq(f(m(\bar{x}_{4})),2) \vee  eq(f(m(\bar{x}_{4})),1)$\\
$5:\neg eq(f(x_2),2) \vee \neg eq(f(x_1),1) \vee eq(f(m(\bar{x}_{4})),0)$\\
$6: \neg eq(f(x_2),2) \vee \neg eq(f(x_0),0) \vee eq(f(m(\bar{x}_{4})),1)$\\
$7:\neg eq(f(x_1),1) \vee \neg eq(f(x_0),0)\vee eq(f(m(\bar{x}_{4})),2)$\\
$8:\neg eq(f(x_1),1) \vee \neg eq(f(x_0),0) \vee \neg eq(f(x_{2}),2)$
\caption{Clause set of Lem. \ref{lem:lbase} for $C(3)$.}
\end{figure}

Feeding this derived clause set to SPASS for several instances aided the construction of the well ordering of Def. \ref{def:importantordering} and the structure of the resolution refutation found in Lem. \ref{lem:l13}.

\section{Refutation of the NiA-schema's Characteristic Clause Set Schema} \label{sec:refuteset}
In this section we provide a refutation of  $C(n)$ for every value of $n$. We prove this result by first deriving a set of clauses which we will consider the least elements of a well ordering. Then we show how resolution can be applied to this least elements to derive clauses of the form $f(\alpha)= \overline{i} \vdash $ for $0\leq i \leq n$. The last step is simply to take the clause $(C5)$ from the clause set $C(n)$ and resolve it with each of the  $f(\alpha) = \overline{i}\vdash $ clauses.

\begin{definition}
\index{Iterated Max term}
\label{def:maxterm}
We define the primitive recursive term $m(k,\overline{x},t)$, where $\overline{x}$ is a schematic variable and $t$ a term, as follows: $\left\lbrace m(k+1,\overline{x},t) \Rightarrow  \right. $\\ $ \left. m(k,\overline{x},max(s(x_{k+1}),t))  \ ; \ m(0,t) \Rightarrow t \right\rbrace$

\end{definition}

\begin{definition}
\index{Resolution Rule}
\label{def:resstep}
We define the resolution rule $res(\sigma,P)$ where $\sigma$ is a unifier and $P$ is a predicate as follows:
\begin{prooftree}
\AxiomC{$\begin{array}{c}\Pi \vdash P^*, \Delta \end{array}$}
\AxiomC{$\begin{array}{c} \Pi' , P^{**}  \vdash \Delta'  \end{array}$}
\RightLabel{$res(\sigma,P)$} 
\BinaryInfC{$\begin{array}{c}\Pi\sigma , \Pi'\sigma \vdash \Delta\sigma , \Delta'\sigma\end{array}$}
\end{prooftree}
 The predicates $P^*$ and $P^{**}$ are defined such that $P^{**}\sigma = P^*\sigma = P$. Also, there are no occurrences of $P$ in $\Pi'\sigma$ and $P$ in $\Delta\sigma$.
\end{definition}
This version of the resolution rule is not complete for unsatisfiable clause sets, but  simplifies the outline of the refutation.

\begin{lemma}
\label{lem:first}
Given $0\leq k$ and $0 \leq n$, the clause $\vdash t \leq m(k,\overline{x},t)$ is derivable by resolution from $C(n)$.
\end{lemma}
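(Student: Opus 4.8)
The plan is to prove the statement by induction on $k$, using only the reflexivity clause $(C1)$ and the max-projection clause $(C3)$. The parameter $n$ will play no role at all: both $(C1)$ and $(C3)$ belong to $C(n)$ for every $n\geq 0$ and do not depend on $n$, so the resolution derivation is uniform in $n$ and the hypothesis $0\leq n$ is only there to guarantee that $C(n)$ is defined.

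First I would strengthen the statement to the following: for \emph{every} term $u$, the clause $\vdash u \leq m(k,\overline{x},u)$ is derivable by resolution from $C(n)$. This generalization of the inner argument $t$ to an arbitrary $u$ is exactly what makes the induction on $k$ go through, because the recursion of Def. \ref{def:maxterm} rewrites $m(k+1,\overline{x},u)$ to $m(k,\overline{x},max(s(x_{k+1}),u))$, i.e. the step case alters the inner argument rather than the index parameter of the end term. The base case $k=0$ is immediate, since $m(0,u)=u$ makes $\vdash u\leq u$ a substitution instance of $(C1)$.

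For the inductive step, assume the strengthened claim for $k$. Applying the induction hypothesis with the inner term $v = max(s(x_{k+1}),u)$ yields a resolution derivation of $\vdash max(s(x_{k+1}),u) \leq m(k,\overline{x},max(s(x_{k+1}),u))$, and by Def. \ref{def:maxterm} the right-hand side equals $M := m(k+1,\overline{x},u)$; thus we have derived $\vdash max(s(x_{k+1}),u)\leq M$. A single resolution step (Def. \ref{def:resstep}) against the instance of $(C3)$ obtained by $\alpha\mapsto s(x_{k+1})$, $\beta\mapsto u$, $\gamma\mapsto M$, namely $max(s(x_{k+1}),u)\leq M \vdash u\leq M$, resolves away the literal $max(s(x_{k+1}),u)\leq M$ and produces $\vdash u\leq M = {}\vdash u\leq m(k+1,\overline{x},u)$, closing the induction. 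Instantiating $u:=t$ then gives the lemma.

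The one point that needs care — and the reason for holding the right-hand side fixed as the full term $M$ throughout the induction — is that $C(n)$ contains \emph{no transitivity clause} for $\leq$. I therefore cannot first derive $\vdash t\leq max(s(x_k),t)$ and then climb outward one $max$ at a time, since that would require chaining $a\leq b$ and $b\leq c$ into $a\leq c$. The induction above sidesteps transitivity entirely: at each level it resolves directly on a single literal whose right-hand side is already $M$, using only $(C1)$ once and $(C3)$ repeatedly to strip the second argument of each $max$ (which, under the association of Def. \ref{def:maxterm}, is precisely the position in which the inner term sits). I would add the remark that, were the opposite association of the $max$-term used, the symmetric clause $(C2)$ would replace $(C3)$, the argument being otherwise unchanged; this is the only place where the associativity choice discussed in Sec. \ref{sec:ATPANAL} matters.
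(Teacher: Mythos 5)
Your proof is correct and takes essentially the same route as the paper's: induction on $k$ with the inner argument kept general (the paper likewise applies its induction hypothesis to the term $max(s(x_{k+1}),t)$), the base case settled by $(C1)$, and a single resolution step against $(C3)$ per level to strip the outermost $max$. The only differences are presentational --- the paper discharges the identification $m(k,\overline{x},max(s(x_{k+1}),u)) = m(k+1,\overline{x},u)$ by an explicit $\varepsilon$-inference rather than treating it as a definitional rewrite, and leaves the strengthening over arbitrary inner terms implicit in the statement of the lemma.
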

\begin{proof}
Let us consider the case when $k=0$, the clause we would like to show derivability of is $\vdash t \leq m(0,t)$, which is equivalent to the clause $\vdash t \leq t$, an instance of (C1).
Assuming the lemma holds for all $m<k+1$, we show that the lemma holds for $k+1$. By the  induction hypothesis, the instance  $\vdash max(s(x_{k+1}),t') \leq m(k,\overline{x},max(s(x_{k+1}),t'))$ is  derivable. Thus, the following derivation proves that the clause $\vdash t' \leq m(k+1,\overline{x}_{k+1},t')$, where $t= max(s(x_{k+1}),t')$ for some term $t'$ is derivable:

\begin{prooftree}

\AxiomC{$\begin{array}{c}(IH) \\ \vdash P \end{array}$}
\AxiomC{$\begin{array}{c}  (C3) \\ max(\beta ,\delta ) \leq \gamma  \vdash  \delta \leq  \gamma  \end{array}$}
\RightLabel{$res(\sigma,P)$} 
\BinaryInfC{$\begin{array}{c}\vdash  t \leq m(k,\overline{x},max(s(x_{k+1}),t)) \end{array}$}
\RightLabel{$\varepsilon$}

\UnaryInfC{$\begin{array}{c}\vdash  t \leq m(k+1,\overline{x},t) \end{array}$}
\end{prooftree}
$$P = max(s(x_{k+1}),t) \leq  m(k,\overline{x},max(s(x_{k+1}),t))$$
$$\sigma =\left\lbrace \beta  \leftarrow s(x_{k+1}),  \gamma \leftarrow m(k,\overline{x},max(s(x_{k+1}),t)) ,  \delta \leftarrow t \right\rbrace $$
\\
$\square$
\end{proof}
See Sec. \ref{Appendix} for proofs of the following three corollaries.
\begin{corollary}
\label{cor:first}
Given $0\leq k,n$, the clause $ \vdash s(x_{k+1})\leq m(k,\overline{x},max(s(x_{k+1}),t))$ is derivable by resolution from $C(n)$.
\end{corollary}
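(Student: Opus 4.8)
The plan is to obtain the target clause by a single resolution step against (C2), mirroring the step case of Lemma \ref{lem:first} but projecting out the \emph{first} argument of the outermost $max$ rather than the second. The key observation is that $s(x_{k+1})$ occurs as the first argument of the $max$ appearing inside $m(k,\overline{x},max(s(x_{k+1}),t))$, and that (C2), namely $max(\alpha,\beta)\leq\gamma \vdash \alpha\leq\gamma$, is precisely the clause that extracts a first argument of a $max$. So once I have a suitable ``reflexivity-style'' clause of the form $\vdash max(s(x_{k+1}),t) \leq \ldots$, one application of (C2) delivers the corollary.

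First I would instantiate Lemma \ref{lem:first}, which gives derivability from $C(n)$ of $\vdash t'' \leq m(k,\overline{x},t'')$ for an arbitrary term $t''$, by taking $t'' = max(s(x_{k+1}),t)$. This yields the clause
$$\vdash max(s(x_{k+1}),t) \leq m(k,\overline{x},max(s(x_{k+1}),t)).$$
Next I would resolve this clause with (C2) via the rule $res(\sigma,P)$ of Def.~\ref{def:resstep}, using the resolved atom
$$P = max(s(x_{k+1}),t)\leq m(k,\overline{x},max(s(x_{k+1}),t))$$
and the unifier
$$\sigma = \{\alpha \leftarrow s(x_{k+1}),\ \beta \leftarrow t,\ \gamma \leftarrow m(k,\overline{x},max(s(x_{k+1}),t))\}.$$
Under $\sigma$ the antecedent of (C2) becomes exactly $P$, while its succedent becomes $s(x_{k+1}) \leq m(k,\overline{x},max(s(x_{k+1}),t))$, so the resolvent is $\vdash s(x_{k+1}) \leq m(k,\overline{x},max(s(x_{k+1}),t))$, which is the desired clause.

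Since the whole derivation invokes only (C2), (C1) (through Lemma \ref{lem:first}), and the resolution rule, it stays within $C(n)$, and no fresh induction on $k$ is needed here because the recursion has already been discharged inside Lemma \ref{lem:first}. I do not expect a substantial obstacle: the only point requiring care is verifying that $\sigma$ is a genuine unifier satisfying the side conditions of Def.~\ref{def:resstep} — in particular that $P$ has no residual occurrence in the surviving parts of either premise — which holds trivially here because the relevant side-multisets are empty. The entire content of the argument is recognizing that (C2) rather than (C3) performs the required projection and feeding $max(s(x_{k+1}),t)$ into Lemma \ref{lem:first}.
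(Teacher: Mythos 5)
Your proposal is correct and matches the paper's own proof of this corollary: the paper likewise instantiates Lemma \ref{lem:first} at the term $max(s(x_{k+1}),t)$ and performs a single $res(\sigma,P)$ step against (C2) with exactly the unifier you give (up to renaming of the variables in (C2)). Nothing further is needed.
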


\begin{corollary}
\label{cor:second}
Given $0\leq k$ and $0 \leq n$, the clause $  f(x_{k+1})= i, $\\ $f(m(k,\overline{x},max(s(x_{k+1}),t))) = i \vdash$ for $0\leq i \leq n$ is derivable by resolution from $C(n)$.
\end{corollary}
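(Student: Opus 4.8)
The plan is to derive the desired clause by a single application of the resolution rule of Def.~\ref{def:resstep}, taking as one premise the clause $(C4_i)$ and as the other the clause whose derivability is already established by Cor.~\ref{cor:first}. The guiding observation is that the antecedent literal $s(\beta) \leq \alpha$ of $(C4_i)$ is exactly the kind of ordering fact that Cor.~\ref{cor:first} produces; discharging it leaves precisely the two equational literals demanded by the statement. Recall that $(C4_i)$, for $0 \leq i \leq n$, is the clause
\[
f(\beta) = \overline{i},\ f(\alpha) = \overline{i},\ s(\beta) \leq \alpha \vdash.
\]

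First I would set $P^{**} = s(\beta) \leq \alpha$ in $(C4_i)$ and $P^{*} = s(x_{k+1}) \leq m(k,\overline{x},max(s(x_{k+1}),t))$ in the clause $\vdash s(x_{k+1}) \leq m(k,\overline{x},max(s(x_{k+1}),t))$ supplied by Cor.~\ref{cor:first}. With the unifier
\[
\sigma = \{\beta \leftarrow x_{k+1},\ \alpha \leftarrow m(k,\overline{x},max(s(x_{k+1}),t))\}
\]
one has $P^{**}\sigma = P^{*}\sigma = P^{*}$, since $P^{*}$ contains neither $\alpha$ nor $\beta$. Applying $res(\sigma, P)$ then resolves away the ordering literal and yields
\[
f(x_{k+1}) = \overline{i},\ f(m(k,\overline{x},max(s(x_{k+1}),t))) = \overline{i} \vdash,
\]
which is exactly the claimed clause.

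Since the statement ranges uniformly over $0 \leq i \leq n$, I would observe that the identical derivation works for every such $i$, the only difference being which numeral $\overline{i}$ (and hence which clause $(C4_i)$) is chosen. I do not anticipate a genuine obstacle: the single point requiring attention is checking that the term substituted for $\alpha$ coincides literal for literal with the term in the conclusion of Cor.~\ref{cor:first}, so that $\sigma$ is a legitimate unifier and the side condition of Def.~\ref{def:resstep} (no remaining occurrence of $P$ in the resolvent) is met. This is routine bookkeeping rather than a conceptual difficulty; the real content lay in Lem.~\ref{lem:first} and Cor.~\ref{cor:first}, which already manufacture the ordering facts that this step consumes.
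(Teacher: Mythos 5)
Your proposal is correct and is essentially identical to the paper's own proof: a single application of $res(\sigma,P)$ between the clause supplied by Cor.~\ref{cor:first} and $(C4_i)$, resolving away the ordering literal with the unifier that sends the two variables of $(C4_i)$ to $x_{k+1}$ and $m(k,\overline{x},max(s(x_{k+1}),t))$. The only difference is notational (the paper's appendix writes $(C4_i)$ with the roles of $\alpha$ and $\beta$ swapped relative to the main text), which does not affect the argument.
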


\begin{corollary}
\label{cor:third}
Given $0\leq k$ and $0 \leq n$, the clause $  f(x_{k+1})= i, f(m(k,\overline{x}_{k},s(x_{k+1}))) = i \vdash$ for $0\leq i \leq n$ is derivable by resolution from $C(n)$.
\end{corollary}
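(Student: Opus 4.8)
The plan is to mirror the proof of Corollary~\ref{cor:second}, but to obtain the required $\leq$-clause as a \emph{direct} instance of Lemma~\ref{lem:first} rather than through a projection of the max-term. The target clause $f(x_{k+1})=\overline{i},\, f(m(k,\overline{x}_k,s(x_{k+1})))=\overline{i}\vdash$ differs from the conclusion of Corollary~\ref{cor:second} only in that its innermost subterm is $s(x_{k+1})$ in place of $\max(s(x_{k+1}),t)$. Since the equational theory $\varepsilon_\vee$ of Eq.~\ref{eq:one} contains no axioms for $\max$, one cannot obtain this clause from Corollary~\ref{cor:second} by a substitution for $t$; it must be derived on its own, albeit by the same two-step pattern of deriving a $\leq$-clause and then resolving against the clauses $(C4_i)$.

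First I would instantiate Lemma~\ref{lem:first} at the term $t:=s(x_{k+1})$, where $x_{k+1}$ is fresh with respect to $\overline{x}_k=(x_1,\dots,x_k)$. This immediately yields the clause $\vdash s(x_{k+1})\leq m(k,\overline{x}_k,s(x_{k+1}))$. The point worth isolating is that, in contrast to Corollaries~\ref{cor:first} and~\ref{cor:second}, no $\max$-projection via $(C2)$ or $(C3)$ is required here: the left-hand side of Lemma~\ref{lem:first} is exactly the innermost argument of $m$, and since that argument is already the successor term $s(x_{k+1})$, the lemma delivers the $s(\cdot)$-headed $\leq$-literal in one shot.

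Second, for each $i$ with $0\leq i\leq n$ I would resolve this clause with $(C4_i)$, namely $f(\beta)=\overline{i},\, f(\alpha)=\overline{i},\, s(\beta)\leq\alpha\vdash$, using the rule $res(\sigma,P)$ of Definition~\ref{def:resstep} on the cut predicate $P=s(x_{k+1})\leq m(k,\overline{x}_k,s(x_{k+1}))$ with unifier $\sigma=\{\beta\leftarrow x_{k+1},\, \alpha\leftarrow m(k,\overline{x}_k,s(x_{k+1}))\}$. The unification of $s(\beta)$ against $s(x_{k+1})$ succeeds precisely because $(C4_i)$ carries an explicit successor in the literal $s(\beta)\leq\alpha$; this is the formal reason the iterated max of Definition~\ref{def:maxterm} wraps each argument in $s(\cdot)$, as already anticipated in the ATP discussion. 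The resolvent is $f(x_{k+1})=\overline{i},\, f(m(k,\overline{x}_k,s(x_{k+1})))=\overline{i}\vdash$, which is exactly the claimed clause, and ranging $i$ over $0\leq i\leq n$ gives all the required instances.

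The one place I expect to need genuine care is the unification bookkeeping in the resolution step: one must verify that $\sigma$ is a well-formed unifier of the two $\leq$-literals and, as demanded by the side conditions of Definition~\ref{def:resstep}, that after applying $\sigma$ the resolved predicate $P$ occurs neither in the antecedent inherited from $(C4_i)$ nor in any succedent, so that the conclusion consists of precisely the two $f(\cdot)=\overline{i}$ literals with nothing spurious. Beyond aligning the indices of $\overline{x}_k$ with the fresh variable $x_{k+1}$ according to Definition~\ref{def:maxterm}, everything else is a routine reuse of Lemma~\ref{lem:first}, so I do not anticipate a substantive obstacle.
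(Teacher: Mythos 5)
Your proposal is correct and follows essentially the same route as the paper's own proof of Cor.~\ref{cor:third}: instantiate Lem.~\ref{lem:first} at $t = s(x_{k+1})$ to obtain $\vdash s(x_{k+1}) \leq m(k,\overline{x}_{k},s(x_{k+1}))$, then resolve this against $(C4_i)$ via $res(\sigma,P)$ with $\sigma = \left\lbrace \beta \leftarrow x_{k+1},\ \alpha \leftarrow m(k,\overline{x}_{k},s(x_{k+1})) \right\rbrace$, yielding the claimed clause for each $0\leq i\leq n$. Your observation that, unlike Cor.~\ref{cor:first} and Cor.~\ref{cor:second}, no $\max$-projection through $(C2)$ or $(C3)$ is needed is also consistent with the paper, whose derivation uses Lem.~\ref{lem:first} directly as the left premise; the remaining discrepancies (the roles of $\alpha$ and $\beta$ in $(C4_i)$) are purely notational.
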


\begin{definition}
Given $0 \leq n$, $-1\leq k \leq j \leq n$,a variable $z$, and a bijective function $b: \mathbb{N}_{n} \rightarrow \mathbb{N}_{n}$ we define the following formulae: 
\begin{equation*}
c_{b}(k,j,z) \equiv \bigwedge_{i=0}^{k} f(x_{b(i)}) = b(i) \vdash \bigvee_{i=k+1}^{j} f(m(n,\overline{x},z)) = b(i).
\end{equation*}
The formulae $c_{b}(-1,-1,z) \equiv \ \vdash$, and $c_{b}(-1,n,z) \equiv \ \vdash \bigvee_{i=0}^{n} f(z) = i$ for all values of $n$ .
\end{definition}

\begin{lemma}
\label{lem:lbase}
Given $0 \leq n$, $-1\leq k \leq n$ and for all bijective functions $b: \mathbb{N}_{n} \rightarrow \mathbb{N}_{n}$. the formula $c_{b}(k,n,z)$ is derivable by resolution from C(n).
\end{lemma}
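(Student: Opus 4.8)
\section*{Proof proposal for Lemma~\ref{lem:lbase}}

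The plan is to fix an arbitrary bijection $b$ and prove the statement by induction on $k$, chaining together single applications of the resolution rule $res(\sigma,P)$ of Def.~\ref{def:resstep}. The guiding observation is that, reading $c_b(k,n,z)$ as a clause, increasing $k$ by one adds the literal $f(x_{b(k)})=b(k)$ to the antecedent and deletes the disjunct $f(m(n,\overline{x},z))=b(k)$ from the succedent; hence each step should be realizable by resolving away exactly one disjunct against a clause supplied by the corollaries.

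For the base case $k=-1$ I would use that $c_b(-1,n,z)\equiv\ \vdash\bigvee_{i=0}^{n}f(z)=i$ is, after the substitution $\alpha\leftarrow z$, literally the input clause $(C5)$ of $C(n)$ (the disjuncts form the same multiset after reordering by $b$). The genuine content starts with the passage from $k=-1$ to $k=0$, which also requires replacing the argument $z$ by $m(n,\overline{x},z)$; this is obtained by instead instantiating $\alpha\leftarrow m(n,\overline{x},z)$ in $(C5)$ before resolving.

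For the induction step, assume $c_b(k-1,n,z)$, that is $\bigwedge_{i=0}^{k-1}f(x_{b(i)})=b(i)\vdash\bigvee_{i=k}^{n}f(m(n,\overline{x},z))=b(i)$, is derivable. Put $P\equiv f(m(n,\overline{x},z))=b(k)$, the disjunct for index $i=k$. By Cor.~\ref{cor:second} (resp. Cor.~\ref{cor:third}), after unrolling the iterated $\max$ term through the $\varepsilon$-theory of Def.~\ref{def:maxterm}, the clause $f(x_{b(k)})=b(k),\,P\vdash$ is derivable from $C(n)$. Applying $res(\sigma,P)$ to $c_b(k-1,n,z)$ and this clause removes $P$ from the succedent and adds $f(x_{b(k)})=b(k)$ to the antecedent, yielding exactly $c_b(k,n,z)$; iterating up to $k=n$ produces all the required clauses, the succedent becoming empty at $k=n$.

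The main obstacle is the term bookkeeping in the step rather than the resolution skeleton. The corollaries as stated extract the successor of the variable in the outermost position of the $\max$ nest, whereas $b$ forces us to extract $x_{b(k)}$ from an arbitrary position; this is reconciled by the fact, coming from Lem.~\ref{lem:first} and Cor.~\ref{cor:first}, that $s(x_{b(k)})\leq m(n,\overline{x},z)$ holds for \emph{every} index, so the relevant instance of $(C4_{b(k)})$ can always be discharged to obtain the clause $f(x_{b(k)})=b(k),\,P\vdash$. One must also repeatedly fold and unfold $m$ by the recursion of Def.~\ref{def:maxterm} so that the unifier $\sigma$ matches the corollary's $\max$ term against the $m(n,\overline{x},z)$ of the induction hypothesis; keeping the associativity convention of Def.~\ref{def:maxterm} fixed throughout is precisely what makes these unifiers exist.
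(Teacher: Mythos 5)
Your proposal is correct and takes essentially the same approach as the paper's own proof: induction on $k$ with $(C5)$ as the base case and, at each step, a single application of $res(\sigma,P)$ against the clause supplied by Cor.~\ref{cor:second} to trade the succedent disjunct $f(m(n,\overline{x},z))=b(k)$ for the antecedent literal $f(x_{b(k)})=b(k)$. The only divergence is bookkeeping: the paper keeps the tail of the $\max$-nest as a free variable and lets the unifier $\left\lbrace y \leftarrow max(s(x_{k+1}),z)\right\rbrace$ grow the nest one level per step (always extracting the innermost variable), whereas you instantiate the full term $m(n,\overline{x},z)$ up front and extract at arbitrary positions $b(k)$ by instantiating the corollaries' free term $t$ and folding via the $\varepsilon$-theory---both are sound realizations of the same induction.
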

\begin{proof}
See Sec. \ref{sec:lbaseproof}. Greatest lower bounds of Def. \ref{def:importantordering}.
\end{proof}

\begin{definition}
Given $0 \leq n$, $0\leq k \leq j \leq n$, and a bijective function $b: \mathbb{N}_{n} \rightarrow \mathbb{N}_{n}$ we define the following formulae:

\begin{equation*}
c'_{b}(k,j) \equiv \bigwedge_{i=0}^{k} f(x_{i+1}) = b(i) \vdash \bigvee_{i=k+1}^{j} f(m(k,\overline{x}_{k},s(x_{k+1})) = b(i).
\end{equation*}
\end{definition}
\begin{lemma}
\label{lem:lbase2}
Given $0 \leq n$, $0\leq k \leq n$ and for all bijective functions $b: \mathbb{N}_{n} \rightarrow \mathbb{N}_{n}$. the formula $c'_{b}(k,n)$ is derivable by resolution from C(n).
\end{lemma}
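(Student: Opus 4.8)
The plan is to derive $c'_{b}(k,n)$ by essentially the same resolution strategy used for $c_{b}(k,n,z)$ in Lemma \ref{lem:lbase}, but driven by Corollary \ref{cor:third} rather than Corollary \ref{cor:second}, since the term $m(k,\overline{x}_{k},s(x_{k+1}))$ occurring in $c'_{b}$ is exactly the max term for which Corollary \ref{cor:third} furnishes a ``collision'' clause. The semantic content to realize by resolution is this: once $f(x_{1})=b(0),\dots,f(x_{k+1})=b(k)$ are assumed, the point $M_{k}\equiv m(k,\overline{x}_{k},s(x_{k+1}))$ — which by Lemma \ref{lem:first} and Corollary \ref{cor:first} strictly dominates each $x_{i+1}$ — cannot be mapped by $f$ to any already used value $b(0),\dots,b(k)$, so $f(M_{k})\in\{b(k+1),\dots,b(n)\}$. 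Accordingly the derivation starts from the instance of $(C5)$ at $M_{k}$, namely $\vdash \bigvee_{i=0}^{n} f(M_{k})=b(i)$ (the disjunction reindexed along the bijection $b$), and resolves away the literals $f(M_{k})=b(i)$ for $i=0,\dots,k$ in turn, moving each into the antecedent.

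I would argue by induction on $k$. For the base case $k=0$ we have $m(0,\overline{x}_{0},s(x_{1}))=s(x_{1})$ by Definition \ref{def:maxterm}; resolving the $(C5)$-instance $\vdash\bigvee_{i=0}^{n}f(s(x_{1}))=b(i)$ against the $i=0$ instance of Corollary \ref{cor:third}, i.e. $f(x_{1})=b(0),\,f(s(x_{1}))=b(0)\vdash$, via $res(\sigma,P)$ of Definition \ref{def:resstep} yields exactly $c'_{b}(0,n)$. For the step I use the unfolding $m(k,\overline{x}_{k},s(x_{k+1}))=m(k-1,\overline{x}_{k-1},max(s(x_{k}),s(x_{k+1})))$ supplied by Definition \ref{def:maxterm} and the $\varepsilon$-rule, which lets the consequent term of the induction hypothesis be matched to $M_{k}$ after its base argument is specialized; resolving the result with the appropriate instance of Corollary \ref{cor:third} removes $f(M_{k})=b(k)$ and adds $f(x_{k+1})=b(k)$ to the antecedent, producing $c'_{b}(k,n)$. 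As in Lemma \ref{lem:lbase}, the derivation is uniform in $b$: since $(C4_{0}),\dots,(C4_{n})$ and $(C5)$ treat all codomain values symmetrically, the derivation for a given $b$ is the relabeling of the one for the identity, so quantifying over all bijections costs nothing.

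The main obstacle is supplying, for each position $i$ with $0\leq i\leq k$, the collision clause $f(x_{i+1})=b(i),\,f(M_{k})=b(i)\vdash$ against the single uniform term $M_{k}$. Corollary \ref{cor:third} delivers this immediately only for the outermost position $i=k$, whereas for $i<k$ the subterm $s(x_{i+1})$ sits inside the nested max and must first be exposed as the pulled-out argument. Discharging this requires re-associating $M_{k}$ by the $\varepsilon$-rule into the shape $m(k-1,\overline{x}',max(s(x_{i+1}),t'))$ demanded by Corollary \ref{cor:second}, and then choosing the unifier identifying $t'$ with the remaining max sub-term; this is precisely the bookkeeping that the well-ordering of Definition \ref{def:importantordering} is set up to organize, guaranteeing that every used value is eliminated exactly once and in a consistent order. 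Once all $k+1$ collision clauses are in hand the remaining resolutions are routine, and the final clause is $c'_{b}(k,n)$.
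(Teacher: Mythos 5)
Your base case is fine and coincides with the paper's, but your inductive step has a genuine gap. You take $c'_{b}(k-1,n)$ as the induction hypothesis and claim that, via the $\varepsilon$-rule, its consequent term can be ``matched to $M_{k}$ after its base argument is specialized.'' It cannot: the consequent term of $c'_{b}(k-1,n)$ is $m(k-1,\overline{x}_{k-1},s(x_{k}))$, whose innermost argument is the compound term $s(x_{k})$, headed by $s$. To match $M_{k}=m(k-1,\overline{x}_{k-1},max(s(x_{k}),s(x_{k+1})))$ you would need a substitution sending $s(x_{k})$ to $max(s(x_{k}),s(x_{k+1}))$, which no substitution can perform (it would have to rewrite the head symbol $s$ into $max$), and the equational theory $\varepsilon$ does not help: it contains only the defining equations of $m$ and of iterated $\bigvee$, not associativity or any law that grafts a new argument inside an existing term. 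So the self-contained induction on $c'_{b}$ does not go through, and ``re-associating $M_{k}$ by the $\varepsilon$-rule'' is not an available move.

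This failure is precisely why the paper proves Lemma \ref{lem:lbase} first and why its proof of Lemma \ref{lem:lbase2} uses $c_{b}(k,n,y)$ --- not $c'_{b}(k,n)$ --- as the clause labelled (IH): the generalized clause keeps a \emph{free variable} at the innermost position of the max term, and it is this free variable (rather than a rigid subterm $s(x_{k})$) that gets instantiated at each step, by $y \leftarrow max(s(x_{k+1}),z)$, and is pinned to $s(x_{k+1})$ via Cor. \ref{cor:third} only in the final resolution. In other words, $c'_{b}(k,n)$ is obtained as a one-step consequence of the free-variable clauses of Lemma \ref{lem:lbase}; it is too specific to carry the induction by itself. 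Your third paragraph actually contains the correct repair --- unfold $m$ so that $M_{k}$ reads $m(i,\overline{x}_{i},max(s(x_{i+1}),t'))$ and instantiate the free variable $t'$ of Cor. \ref{cor:second} with the remaining nested subterm --- but note that (i) this works because $t'$ is free in Cor. \ref{cor:second}, not because of any re-association; (ii) chaining these resolutions starting from $(C5)$ simply re-derives the clauses $c_{b}(i,n,\cdot)$ of Lemma \ref{lem:lbase}, so you should either invoke that lemma or restate your induction in terms of the free-variable clause; and (iii) the well-ordering of Def. \ref{def:importantordering} plays no role here --- in the paper it organizes the induction of Lem. \ref{lem:l13}, not this derivation.
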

\begin{proof}
See Sec. \ref{sec:lbase2proof}.
\end{proof}
\begin{definition}
\index{Clause Set Ordering}
\label{def:importantordering}
Given $0\leq n$ we define the ordering relation $\lessdot_{n}$ over $A_{n} = \left\lbrace (i,j) | i\leq j \right. $ $\left. \wedge 0 \leq i,j \leq n \wedge i,j \in \mathbb{N} \right\rbrace$
 s.t. for $(i,j),(l,k) \in A_n$, $(i,j) \lessdot_{n} (l,k)$ iff  $i,k,l \leq n$, $j<n$, $l\leq i$, $k\leq j$, and $i = l \leftrightarrow j \not = k$ and $j = k \leftrightarrow i \not = l$.
\end{definition}

\begin{lemma}
\index{Complete Well Ordering}
The ordering $\lessdot_{n}$ over $A_{n}$ for $0\leq n$ is a complete well ordering.
\end{lemma}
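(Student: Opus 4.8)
The plan is to verify the three ingredients packed into the phrase \emph{complete well ordering}: that $\lessdot_n$ is a strict order, that it is well founded, and that it is linear on $A_n$. The main leverage comes from unpacking the two biconditionals in Def. \ref{def:importantordering}. Writing $P$ for the statement ``$i=l$'' and $Q$ for ``$j=k$'', the conjunction $(P \leftrightarrow \neg Q) \wedge (Q \leftrightarrow \neg P)$ asserts that exactly one of $P,Q$ holds; together with $l\leq i$ and $k\leq j$ this means that every step $(i,j)\lessdot_n(l,k)$ either fixes the first coordinate and strictly decreases the second, or fixes the second and strictly decreases the first. In particular $(l,k)$ lies strictly below $(i,j)$ in the componentwise order, so the rank $\rho(i,j)=i+j$ drops by at least one at each step.

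First I would dispatch the easy properties. Irreflexivity is immediate, since $(i,j)\lessdot_n(i,j)$ would force both $P$ and $Q$. Well-foundedness follows at once from the strictly decreasing $\mathbb{N}$-valued measure $\rho$ (and indeed $A_n$ is finite, so there are no infinite descending chains), which also yields that every non-empty subset of $A_n$ has a $\lessdot_n$-minimal element. Because the defining relation links only pairs agreeing in exactly one coordinate, it is natural to read the order proper as its transitive closure; transitivity then holds by construction, and antisymmetry follows from the strict monotonicity of $\rho$, which forbids cycles. At this stage I would also identify the minimal elements and check that they coincide with the greatest-lower-bound clauses $c_b(k,n,z)$ supplied by Lem. \ref{lem:lbase}, which anchors the induction used later in the refutation.

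The hard part will be linearity, i.e.\ showing that any two distinct elements of $A_n$ are comparable. Here the side conditions $i\leq j$ and $j<n$ do the real work, since they restrict which pairs may serve as the smaller endpoint of a step, and the awkward cases are pairs $(i,j),(l,k)$ that share neither coordinate. For these I would argue by induction on $\rho$, routing the comparison through an intermediate element obtained by equalising one coordinate and then the other, while checking that each intermediate pair still lies in $A_n$ and respects the $j<n$ constraint. Establishing that such a connecting chain always exists is the crux of the argument and the step most sensitive to the exact form of the side conditions in Def. \ref{def:importantordering}. Should \emph{complete} instead be intended in the weaker sense of a well-founded order adequate for the induction of Lem. \ref{lem:lbase}, the first two steps already suffice and this obstacle does not arise.
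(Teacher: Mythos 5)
Your first two steps are sound and in fact more careful than the paper's own one-line proof. The paper simply asserts that $\lessdot_n$ is ``transitive, anti-reflexive, and anti-symmetric'' and that every chain has a greatest lower bound, namely one of the elements $(i,n)$; as you correctly observe, the relation as literally defined in Def.~\ref{def:importantordering} only relates pairs agreeing in exactly one coordinate, so transitivity genuinely fails for the raw relation and one must pass to the transitive closure (your reading), after which irreflexivity, antisymmetry and well-foundedness all follow from your rank argument or just from finiteness of $A_n$. In this respect your write-up patches a real sloppiness in the paper. (One small mismatch: the paper pairs the elements $(k,n)$ with the clauses $c'_b(k,n)$ of Lem.~\ref{lem:lbase2}, not with the $c_b(k,n,z)$ of Lem.~\ref{lem:lbase}.)

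The genuine gap is your third step: linearity is not merely the hard part, it is false, so the induction-on-$\rho$ argument you sketch for it cannot be completed. Every step $(i,j)\lessdot_n(l,k)$ forces $(l,k)$ to lie componentwise below $(i,j)$, hence any two elements comparable in the transitive closure must be componentwise comparable. But, for $n\geq 2$, the pairs $(1,1)$ and $(0,2)$ both belong to $A_n$ and neither dominates the other componentwise ($0<1$ but $2>1$), so they are incomparable under $\lessdot_n$ and under its transitive closure; no chain of intermediate elements can connect them. Thus $\lessdot_n$ is an essentially partial order, and ``complete well ordering'' in the paper cannot mean a total well-order. It must be read in exactly your fallback sense: a well-founded strict partial order in which every (nonempty) chain has a greatest lower bound --- the paper's intent being that these bounds are supplied by the elements $(i,n)$, which serve as the base cases anchoring the induction of Lem.~\ref{lem:l13}. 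So the fix to your proposal is not to carry out the linearity step but to delete it; your first two steps, together with the identification of the minimal elements, already constitute the whole content of the lemma as the paper uses it.
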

\begin{proof}
Every chain has a greatest lower bound\index{Greatest Lower Bound}, namely, one of the members of $A_{n}$, $(i,n)$  where $0\leq i \leq n$, and it is transitive, anti-reflexive, and anti-symmetric.
\end{proof}

The clauses proved derivable by Lem. \ref{lem:lbase2} can be paired with members of  $A_{n}$ as follows,  $c'_{b}(k,n)$ is paired with $(k,n)$. Thus, each $c'_{b}(k,n) $ is essentially the greatest lower bound of some chain in the ordering $\lessdot_{n}$ over $A_{n}$.

\begin{lemma}
\label{lem:l13}
Given  $0\leq k \leq j\leq n$, for all bijective functions $b: \mathbb{N}_{n} \rightarrow \mathbb{N}_{n}$ the clause $c'_{b}(k,j)$ is derivable from C(n).
\end{lemma}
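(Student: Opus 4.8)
The plan is to prove the claim by well-founded induction along the ordering $\lessdot_n$ of Definition \ref{def:importantordering}, seeding the induction with the clauses supplied by Lemma \ref{lem:lbase2}. Concretely, I pair each admissible pair $(k,j)$ with the clause $c'_b(k,j)$ and show that $c'_b(k,j)$ is derivable from $C(n)$ whenever $c'_b$ is derivable for every $\lessdot_n$-predecessor of $(k,j)$. The base cases are the $\lessdot_n$-minimal pairs, the greatest lower bounds $(k,n)$, for which $c'_b(k,n)$ is exactly the clause produced by Lemma \ref{lem:lbase2}; this is the content of the remark pairing $c'_b(k,n)$ with $(k,n)$, so each chain is already anchored at its least element.

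For the inductive step, fix $(k,j)$ with $j<n$. Its $\lessdot_n$-predecessors are of two kinds, mirroring the two biconditionals in Definition \ref{def:importantordering}: the pair $(k,j+1)$ obtained by keeping the first coordinate fixed (with the anchor $(k,n)$ supplied directly by the base case when $j=n-1$), and the pair $(k+1,j)$ obtained by keeping the second coordinate fixed. In either case I will obtain $c'_b(k,j)$ from the inductively available predecessor clause by a single application of the resolution rule $res(\sigma,P)$ of Definition \ref{def:resstep}, resolving on an atom $P \equiv f(m(k,\overline{x}_k,s(x_{k+1}))) = b(l)$. This is precisely the disjunct-elimination move observed empirically in the SPASS refutation of $C(4)$ recorded in Figure \ref{fig:seqone}, where each step removes one succedent disjunct $f(m(\cdots))=b(l)$. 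The resolving partner is the appropriate instance of Corollary \ref{cor:third} (for the second-coordinate move) or Corollary \ref{cor:second} (for the first-coordinate move, where the iterated-max term must be reshaped), and the unifier $\sigma$ together with the $\varepsilon$-rule rewrites $m(k+1,\ldots)$ to $m(k,\overline{x}_k,max(s(x_{k+1}),\cdot))$ according to the recursion of Definition \ref{def:maxterm}.

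The hard part will be to guarantee that the resolvent is syntactically exactly $c'_b(k,j)$ rather than a clause carrying a spurious, repeated-variable antecedent literal: naively peeling a succedent disjunct $f(m(k,\overline{x}_k,s(x_{k+1})))=b(l)$ against Corollary \ref{cor:third} reintroduces a literal $f(x_{k+1})=b(l)$ on a variable $x_{k+1}$ that already occurs in the antecedent, so one must arrange the predecessor and the corollary instance so that the new constraint lands on a fresh coordinate. This is where the two directions of $\lessdot_n$ and the bijectivity of $b$ are essential: distinctness of the values $b(0),\ldots,b(n)$ keeps the produced clause inside the $c'_b$-family, and the variable bookkeeping of the max-term (Definition \ref{def:maxterm}, reconciled by the $\varepsilon$-rule) is exactly the ``term structure and unifiers'' difficulty described in Section \ref{sec:ATPANAL}. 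Once this matching is established for both move-types, well-foundedness of $\lessdot_n$ guarantees that finitely many such steps reach every $(k,j)$ with $k\le j\le n$, completing the derivation.
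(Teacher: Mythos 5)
Your induction scaffolding matches the paper's: the base cases are the clauses $c'_{b}(k,n)$ of Lemma \ref{lem:lbase2}, viewed as the minimal elements ($\lessdot_{n}$-greatest lower bounds) of Definition \ref{def:importantordering}, and the step is a single $res(\sigma,P)$ inference peeling off the succedent literal $f(m(k,\overline{x}_{k},s(x_{k+1})))=b(j+1)$. But your inductive step has a genuine gap, and you name it yourself without resolving it. If you resolve $c'_{b}(k,j+1)$ against an instance of Corollary \ref{cor:third} (or \ref{cor:second}), the corollary's side literal survives into the antecedent: the resolvent is
$\bigwedge_{i=0}^{k} f(x_{i+1})=b(i)\, ,\, f(x_{k+1})=b(j+1) \vdash \bigvee_{i=k+1}^{j} f(m(k,\overline{x}_{k},s(x_{k+1})))=b(i)$,
which is not $c'_{b}(k,j)$: since $b$ is injective, $f(x_{k+1})=b(j+1)$ differs from the existing literal $f(x_{k+1})=b(k)$, so contraction cannot remove it, and resolution has no rule for discarding an antecedent literal. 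Your proposed remedy --- arranging for the new constraint to ``land on a fresh coordinate'' --- cannot work either: the corollary clause contains the same max-term $m(k,\overline{x}_{k},s(x_{k+1}))$ as the literal being resolved, so unification forces its first variable to collide with $x_{k+1}$; and even if it were renamed apart, the resolvent would still carry a spurious antecedent literal on the fresh variable and would again fail to be $c'_{b}(k,j)$.

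The paper closes this gap with a device your proposal never mentions: the second premise of the resolution step is not a corollary clause but a \emph{second instance of the induction hypothesis}, namely the purely negative clause $c'_{b'}(k+1,k+1)$, i.e.\ $\Pi_{b'}(k),\, f(x_{b'(k+1)})=b'(k+1) \vdash$, taken for a \emph{different} bijection $b'$ with $b'(i)=b(i)$ for $0\leq i\leq k$ and $b'(k+1)=b(j+1)$; such a $b'$ exists precisely because $b$ is bijective, so $b(j+1)\notin\{b(0),\ldots,b(k)\}$. Resolving $c'_{b}(k,j+1)$ with this clause under $\sigma=\{x_{b'(k+1)}\leftarrow m(k,\overline{x}_{k},s(x_{k+1}))\}$ leaves as side literals only $\Pi_{b'}(k)$, which coincide syntactically with the antecedent $\Pi_{b}(k)$ already present, so a contraction $c:l$ yields exactly $c'_{b}(k,j)$. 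Thus the step consumes two IH instances, at $(k,j+1)$ and at $(k+1,k+1)$ (not your $(k+1,j)$), and the Corollaries \ref{cor:second} and \ref{cor:third} are used only earlier, in Lemmas \ref{lem:lbase} and \ref{lem:lbase2}, to manufacture the base clauses. Without the modified bijection $b'$ and the contraction step, your induction cannot close.
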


\begin{proof}
We will prove this lemma by induction over $A_{n}$.  The base cases are the clauses $c'_{b}(k,n)$ from Lem. \ref{lem:lbase2}.  Now let us assume that the lemma holds for all clauses  $c'_{b}(k,i)$  pairs such that, $0\leq k \leq j <i \leq n$ and for all clauses $c'_{b}(w,j)$ such that $0\leq k< w \leq j\leq n$, then we want to show that the lemma holds for the clause $c'_{b}(k,j)$. We have not made any restrictions on the bijections used, we will need two different bijections to prove the theorem. The following derivation provides proof:
\begin{prooftree}
\AxiomC{$\begin{array}{c}(IH[k,j+1]) \\ \Pi_{b}(k), \vdash \Delta_{b}(k,j), P_{b}(j+1)\end{array}$}
\AxiomC{$\begin{array}{c}  (IH[k+1,k+1]) \\  \Pi_{b'}(k),  f(x_{b'(k+1)}) = b'(k+1) \vdash  \end{array}$}
\RightLabel{$res(\sigma,P)$} 
\BinaryInfC{$\begin{array}{c}   \Pi_{b}(k), \Pi_{b'}(k) \vdash \Delta_{b}(k,j) \end{array}$}
\RightLabel{$c:l$} 
\UnaryInfC{$\begin{array}{c}   \Pi_{b}(k)\vdash \Delta_{b}(k,j) \\ c'_{b}(k,j)\end{array}$}
\end{prooftree}
\begin{minipage}{.57\textwidth}
 $P_{b}(k+1) = f(m(k,\overline{x}_{k},s(x_{k+1}))) = b(k+1)$,
 \end{minipage}
\begin{minipage}{.5\textwidth}
$\Pi_{b}(k) \equiv \bigwedge_{i=0}^{k} f(x_{b(i)}) = b(i)$,
\end{minipage}
$$\Delta_{b}(k,j) \equiv \bigvee_{i=k+1}^{j} f(m(k,\overline{x}_{k},s(x_{k+1}))) = b(i), $$ 
$$\sigma =\left\lbrace x_{b'(k+1)}  \leftarrow m(k,\overline{x}_{k},s(x_{k+1}))\right\rbrace $$

We assume that $b'(k+1) = b(j+1)$ and that $b'(x)=b(x)$ for $0 \leq x \leq k$. 
\end{proof}

\begin{theorem}
\label{thm:refofC}
Given $n \geq 0$, $C(n)$ derives $\vdash$. 
\end{theorem}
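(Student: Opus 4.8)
The plan is to obtain the empty sequent $\vdash$ from $C(n)$ by feeding the clauses produced by Lemma \ref{lem:l13} into a short chain of resolution steps against $(C5)$. The key observation is that Lemma \ref{lem:l13} is easiest to read in its degenerate diagonal case $k=j=0$. For any bijection $b$ the succedent of $c'_b(0,0)$ is void, since the disjunction $\bigvee_{i=1}^{0}$ is empty, so that
$$c'_{b}(0,0) \equiv \bigwedge_{i=0}^{0} f(x_{i+1}) = b(i) \vdash \ \equiv\ f(x_1) = b(0) \vdash.$$
Thus each $c'_b(0,0)$ is a unit clause of exactly the shape $f(\alpha) = \overline{i} \vdash$ that the introduction to this section promised to construct.

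First I would vary the bijection $b$. As $b$ ranges over all bijections, $b(0)$ realizes in turn each value $\overline{0}, \ldots, \overline{n}$ occurring in the succedent of $(C5)$. Hence Lemma \ref{lem:l13}, instantiated at $k=j=0$, yields the whole family of unit clauses $\{\, f(x_1) = \overline{i} \vdash \mid 0 \leq i \leq n \,\}$ as derivable from $C(n)$.

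Next I would resolve $(C5)$, namely $\vdash f(\alpha) = \overline{0}, \ldots, f(\alpha) = \overline{n}$, against these unit clauses one literal at a time, using $res(\sigma,P)$ of Def. \ref{def:resstep} with $P = f(\cdot)=\overline{i}$ and the unifier $\sigma$ identifying the two arguments of $f$. Taking $(C5)$ as the left premise (where $f(\alpha)=\overline{i}$ sits in the succedent) and $f(x_1)=\overline{i}\vdash$ as the right premise removes the literal $f(\alpha)=\overline{i}$ from the succedent. The only bookkeeping is that after the first step all remaining succedent literals share a single variable, so each subsequent resolution merely renames the variable of the incoming unit clause before unifying; since the literals differ in the numeral $\overline{i}$, exactly one literal matches $P$ at each step, so the restricted form of Def. \ref{def:resstep} applies. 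Iterating through $i = 0, 1, \ldots, n$ strips off all $n+1$ succedent literals and produces $\vdash$.

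The point I would stress is that the real work of refuting $C(n)$ does not reside in this theorem but in Lemma \ref{lem:l13} and the results it rests on: the iterated $\max$-term construction of Def. \ref{def:maxterm}, the base clauses of Lemma \ref{lem:lbase2}, and the induction over the well ordering $\lessdot_n$ of Def. \ref{def:importantordering}. Once those are established, the present statement is a short assembly, and the only genuine obstacle worth checking is that the deliberately incomplete resolution rule still suffices here; this holds because every step involves a single resolved literal with a trivial unifier, placing each inference squarely within the rule's admitted form.
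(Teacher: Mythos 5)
Your proposal is correct and takes essentially the same route as the paper: both obtain the unit clauses $f(\alpha)=\overline{i}\,\vdash$ (for $0\leq i\leq n$) from Lemma \ref{lem:l13} and then eliminate the succedent literals of $(C5)$ by successive resolution steps, exactly as announced at the start of Sec. \ref{sec:refuteset}. The only cosmetic difference is that the paper packages the literal-by-literal stripping as an induction on $n$ (the inductive hypothesis leaving the residual literal $\vdash f(\alpha)=\overline{n+1}$), whereas you iterate through $i=0,\ldots,n$ directly and spell out explicitly how varying the bijection $b$ in $c'_b(0,0)$ yields all the unit clauses.
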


\begin{proof}
By Lem. \ref{lem:l13}, The clauses $f(x)=  0 \vdash $ , $\cdots$ , $f(x)= n \vdash $ are derivable. Thus, we can prove the statement by induction on the instantiation of the clause set. When $n=0$, the clause (C5) is $\vdash f(x)=  0$ which resolves with $f(x)=  0 \vdash $ to derive $\vdash$. Assuming that for all $n'\leq n$ the theorem holds we now show that it holds for $n+1$. The clause (C5) from the clause set $C(n+1)$ is the clause (C5) from the clause set $C(n)$ with the addition of a positive instance of $\vdash f(\alpha)= (n+1)$. Thus, by the induction hypothesis we can derive the clause $\vdash f(\alpha)= (n+1)$. By Lem. \ref{lem:l13} we can derive $f(x)= (n+1) \vdash $, and thus, resolving the two derived clauses results in $\vdash$.

\end{proof}

\section{Conclusion} 

\label{sec:Conclusion}

At the end of the introduction, we outlined some essential points to be addressed in future work, i.e. finding a refutation which fits the framework of ~\cite{CERESS2} or showing that it is not possible and constructing a more expressive language.  Concerning the compression (see Sec. \ref{growthrate}), knowing the growth rate of the ACNF can help in the construction of a more expressive language for the refutations, and will be part of the future investigation. However, there is an interesting points which was not addressed, namely extraction of a {\em Herbrand system}. The extraction of  {\em Herbrand system} is the theoretical advantage this framework has over the previously investigated system~\cite{Mcdowell97cut-eliminationfor}\footnote{The schematic CERES method has the subformula property.} for cut elimination in the presence of induction, but without a refutation within the expressive power of the resolution calculus, the method of~\cite{CERESS2} cannot be used to extract a Herbrand system from our refutation. We plan to investigate the extraction of a Herbrand system for the NiA-schema given the current state of the proof analysis. Development of such a method can help find Herbrand systems in other cases when the ACNF-schema cannot be expressed in the calculus provided in~\cite{CERESS2}.

\bibliographystyle{plain}
\bibliography{references}

\section{Appendix}
\label{Appendix}

\subsection{Proof of Lem. \ref{cor:first}}
\begin{prooftree}
\AxiomC{$\begin{array}{c}(Lem. \ref{lem:first}) \\ \vdash P \end{array}$}
\AxiomC{$\begin{array}{c}  (C2) \\ max(\beta ,\delta ) \leq \gamma  \vdash  \beta \leq  \gamma  \end{array}$}
\RightLabel{$res(\sigma,P)$} 
\BinaryInfC{$\begin{array}{c}\vdash  s(x_{k+1}) \leq m(k,\overline{x},max(s(x_{k+1}),t)) \end{array}$}
\end{prooftree}
$$P = max(s(x_{k+1}),t) \leq  m(k,\overline{x},max(s(x_{k+1}),t))$$
$$\sigma =\left\lbrace \beta  \leftarrow s(x_{k+1}),  \gamma \leftarrow m(k,\overline{x},max(s(x_{k+1}),t)) ,  \delta \leftarrow t \right\rbrace $$
\\
$\square$

\subsection{Proof of Cor. \ref{cor:second}}
\begin{prooftree}
\AxiomC{$\begin{array}{c}(Cor. \ref{cor:first}) \\ \vdash P \end{array}$}
\AxiomC{$\begin{array}{c}  (C4_i) \\ f(\alpha)= i , f(\beta) = i , s(\alpha)\leq \beta  \vdash   \end{array}$}
\RightLabel{$res(\sigma,P)$} 
\BinaryInfC{$\begin{array}{c}   f(x_{k+1})= i, f(m(k,\overline{x}_{k},max(s(x_{k+1}),t))) = i \vdash \end{array}$}
\end{prooftree}
$$P = s(x_{k+1}) \leq  m(k,\overline{x}_{k},max(s(x_{k+1}),t))$$
$$\sigma =\left\lbrace \alpha  \leftarrow x_{k+1},  \beta \leftarrow m(k,\overline{x}_{k},max(s(x_{k+1}),t)) \right\rbrace $$
\\
$\square$

\subsection{Proof of Cor. \ref{cor:third}}
\begin{prooftree}
\AxiomC{$\begin{array}{c}(Lem. \ref{lem:first}) \\ \vdash P \end{array}$}
\AxiomC{$\begin{array}{c}  (C4_i) \\ f(\alpha)= i , f(\beta) = i , s(\alpha)\leq \beta  \vdash   \end{array}$}
\RightLabel{$res(\sigma,P)$} 
\BinaryInfC{$\begin{array}{c}   f(x_{k+1})= i, f(m(k,\overline{x},s(x_{k+1}))) = i \vdash \end{array}$}
\end{prooftree}
$$P = s(x_{k+1}) \leq  m(k,\overline{x}_{k},s(x_{k+1}))$$
$$\sigma =\left\lbrace \alpha  \leftarrow x_{k+1},  \beta \leftarrow m(k,\overline{x}_{k},s(x_{k+1}))) \right\rbrace $$
\\
$\square$
\subsection{Proof of Cor. \ref{lem:lbase}}
\label{sec:lbaseproof}
We prove this lemma by induction on $k$ and a case distinction on $n$.  When $n=0$ there are two possible values for $k$, $k=0$ or $k=-1$. When $k=-1$ the clause is an instance of (C5). When $k=0$ we have the following derivation:
\begin{prooftree}
\AxiomC{$\begin{array}{c}(C5) \\ c_{b}(-1,1,y) \end{array}$}
\AxiomC{$\begin{array}{c}  (Cor. \ref{cor:second}[i\leftarrow b(0), k\leftarrow 0 ]) \\ f(x_{1})= b(0), f(max(s(x_{1}),z)) = b(0) \vdash  \end{array}$}
\RightLabel{$res(\sigma,P)$} 
\BinaryInfC{$\begin{array}{c}   c_{b}(0,1,z)  \end{array}$}
\end{prooftree}
$$P = f(max(s(x_{1}),z)) = b(0)$$
$$\sigma =\left\lbrace y  \leftarrow max(s(x_{1}),z)\right\rbrace $$

By $(Cor. \ref{cor:second}[i\leftarrow b(0), k\leftarrow 0 ])$ we mean take the clause that is proven derivable by Cor. \ref{cor:second} and instantiate the free parameters of Cor. \ref{cor:second}, i.e. $i$ and $k$, with the given terms, i.e. $b(0)$ and $0$.   Remember that $b(0)$ can be either $0$ or $1$. We will use this syntax through out the dissertion. When $n>0$ and  $k=-1$ we again trivially have (C5). When $n>0$ and  $k=0$, the following derivation suffices:

\begin{prooftree}
\AxiomC{$\begin{array}{c}(C5) \\ c_{b}(-1,n,y) \end{array}$}
\AxiomC{$\begin{array}{c}  (Cor. \ref{cor:second}[i\leftarrow b(0), k\leftarrow 0 ]) \\ f(x_{1})= b(0), f(max(s(x_{1}),z)) = b(0) \vdash  \end{array}$}
\RightLabel{$res(\sigma,P)$} 
\BinaryInfC{$\begin{array}{c}   c_{b}(0,n,z)  \end{array}$}
\end{prooftree}
$$P = f(max(s(x_{1}),z)) = b(0)$$
$$\sigma =\left\lbrace y  \leftarrow max(s(x_{1}),z)\right\rbrace $$

The main difference between the case for $n=1$ and $n>1$ is the possible instantiations of the bijection at $0$. In the case of $n>1$, $b(0) = 0 \ \vee  \cdots \vee \ b(0) = n$.  Now we assume that for all $w< k+1 <n$ and $n>0$ the theorem holds, we proceed to show that the  theorem holds for $k+1$.  The following derivation will suffice:

\begin{prooftree}
\AxiomC{$\begin{array}{c}(IH) \\ c_{b}(k,n,y) \end{array}$}
\AxiomC{$\begin{array}{c}  (Cor. \ref{cor:second}[i\leftarrow b(k+1)]) \\  f(x_{k+1})= b(k+1), P \vdash  \end{array}$}
\RightLabel{$res(\sigma,P)$} 
\BinaryInfC{$\begin{array}{c}   c_{b}(k+1,n,z)  \end{array}$}
\end{prooftree}
$$P = f(m(k,\overline{x}_{k},max(s(x_{k+1}),t))) = b(k+1)$$
$$\sigma =\left\lbrace y  \leftarrow max(s(x_{k+1}),z)\right\rbrace $$
\\
$\square$

\subsection{Proof of Lem. \ref{lem:lbase2}}
\label{sec:lbase2proof}
We prove this lemma by induction on $k$ and a case distinction on $n$.  When $n=0$ it must be the case that $k=0$. When $k=0$ we have the following derivation :
\begin{prooftree}
\AxiomC{$\begin{array}{c}(C5) \\ c_{b}(-1,0,y) \end{array}$}
\AxiomC{$\begin{array}{c}  (Cor. \ref{cor:third}[i\leftarrow 0, k\leftarrow 0 ]) \\ f(x_{1})= 0, f(s(x_{1})) = 0 \vdash  \end{array}$}
\RightLabel{$res(\sigma,P)$} 
\BinaryInfC{$\begin{array}{c}   c'_{b}(0,0)  \end{array}$}
\end{prooftree}
$$P = f(s(x_{1})) = 0$$
$$\sigma =\left\lbrace y  \leftarrow s(x_{1})\right\rbrace $$

Remember that $b(0)$ can only be mapped to  $0$. When $n>0$ and  $k=0$, the following derivation suffices:

\begin{prooftree}
\AxiomC{$\begin{array}{c}(C5) \\ c_{b}(-1,n,y) \end{array}$}
\AxiomC{$\begin{array}{c}  (Cor. \ref{cor:third}[i\leftarrow b(0), k\leftarrow 0 ]) \\ f(x_{1})= b(0), f(s(x_{1})) = b(0) \vdash  \end{array}$}
\RightLabel{$res(\sigma,P)$} 
\BinaryInfC{$\begin{array}{c}   c'_{b}(0,n)  \end{array}$}
\end{prooftree}
$$P = f(s(x_{1})) = b(0)$$
$$\sigma =\left\lbrace y  \leftarrow s(x_{1})\right\rbrace $$

The main difference between the case for $n=0$ and $n>0$ is the possible instantiations of the bijection at $0$. In the case of $n>0$, $b(0) = 0 \ \vee  \cdots \vee \ b(0) = n$.  Now we assume that for all $w\leq k$ the theorem holds, we proceed to show that the  theorem holds for $k+1$.  The following derivation will suffice:

\begin{prooftree}
\AxiomC{$\begin{array}{c}(IH) \\ c_{b}(k,n,y) \end{array}$}
\AxiomC{$\begin{array}{c}  (Cor. \ref{cor:second}[i\leftarrow b(k+1)]) \\  f(x_{k+1})= b(k+1), P \vdash  \end{array}$}
\RightLabel{$res(\sigma,P)$} 
\BinaryInfC{$\begin{array}{c}   c_{b}(k+1,n,z)  \end{array}$}
\end{prooftree}
$$P = f(m(k,\overline{x}_{k},max(s(x_{k+1}),t))) = b(k+1)$$
$$\sigma =\left\lbrace y  \leftarrow max(s(x_{k+1}),z)\right\rbrace $$
\\
$\square$

\subsection{SPASS Refutation of $C(n)$: Instance Four}
\label{sec:spassreffour}
The refutation provided in this section is almost identical to the output from SPASS except for a few minor changes to the syntax to aid reading. \\\\
$1[0:Inp] \ \|\|  \ \Rightarrow \ eq(f(U),3) \ , \ eq(f(U),2) \ , \ eq(f(U),1) \ , \ eq(f(U),0)*$\\\\
$2[0:Inp] \ \|\|  \ \Rightarrow \ le(U,U)*$\\\\
$3[0:Inp] \ \|\| \ le(max(U,V),W)* \ \Rightarrow \ le(U,W)$\\\\
$4[0:Inp] \ \|\| \ le(max(U,V),W)* \ \Rightarrow \ le(V,W)$\\\\
$5[0:Inp] \ \|\| \ le(s(U),V)*+ \ , \ eq(f(U),0)* \ , \ eq(f(V),0)* \ \Rightarrow $\\\\
$6[0:Inp] \ \|\| \ le(s(U),V)*+ \ , \ eq(f(U),1)* \ , \ eq(f(V),1)* \ \Rightarrow $\\\\
$7[0:Inp] \ \|\| \ le(s(U),V)*+ \ , \ eq(f(U),2)* \ , \ eq(f(V),2)* \ \Rightarrow $\\\\
$8[0:Inp] \ \|\| \ le(s(U),V)*+ \ , \ eq(f(U),3)* \ , \ eq(f(V),3)* \ \Rightarrow $\\\\
$9[0:Res:2.0,4.0] \ \|\|  \ \Rightarrow \ le(U,max(V,U))$\\\\
$10[0:Res:9.0,4.0] \ \|\|  \ \Rightarrow \ le(U,max(V,max(W,U)))$\\\\
$12[0:Res:2.0,3.0] \ \|\|  \ \Rightarrow \ le(U,max(U,V))$\\\\
$13[0:Res:9.0,3.0] \ \|\|  \ \Rightarrow \ le(U,max(V,max(U,W)))$\\\\
$15[0:Res:12.0,3.0] \ \|\|  \ \Rightarrow \ le(U,max(max(U,V),W))$\\\\
$16[0:Res:12.0,4.0] \ \|\|  \ \Rightarrow \ le(U,max(max(V,U),W))$\\\\
$19[0:Res:15.0,3.0] \ \|\|  \ \Rightarrow \ le(U,max(max(max(U,V),W),X))$\\\\
$20[0:Res:15.0,4.0] \ \|\|  \ \Rightarrow \ le(U,max(max(max(V,U),W),X))$\\\\
$23[0:Res:2.0,8.0] \ \|\| \ eq(f(U),3) \ , \ eq(f(s(U)),3)* \ \Rightarrow$\\\\
$25[0:Res:10.0,8.0] \ \|\| \ eq(f(U),3) \ , \ eq(f(max(V,max(W,s(U)))),3)* \ \Rightarrow$\\\\
$27[0:Res:12.0,8.0] \ \|\| \ eq(f(U),3) \ , \ eq(f(max(s(U),V)),3)* \ \Rightarrow$\\\\
$28[0:Res:15.0,8.0] \ \|\| \ eq(f(U),3) \ , \ eq(f(max(max(s(U),V),W)),3)* \ \Rightarrow$\\\\
$42[0:Res:2.0,7.0] \ \|\| \ eq(f(U),2) \ , \ eq(f(s(U)),2)* \ \Rightarrow$\\\\
$43[0:Res:9.0,7.0] \ \|\| \ eq(f(U),2) \ , \ eq(f(max(V,s(U))),2)* \ \Rightarrow$\\\\
$44[0:Res:10.0,7.0] \ \|\| \ eq(f(U),2) \ , \ eq(f(max(V,max(W,s(U)))),2)* \ \Rightarrow$\\\\
$50[0:Res:12.0,7.0] \ \|\| \ eq(f(U),2) \ , \ eq(f(max(s(U),V)),2)* \ \Rightarrow$\\\\
$52[0:Res:16.0,7.0] \ \|\| \ eq(f(U),2) \ , \ eq(f(max(max(V,s(U)),W)),2)* \ \Rightarrow$\\\\
$54[0:Res:19.0,4.0] \ \|\|  \ \Rightarrow \ le(U,max(max(max(max(V,U),W),X),Y))$\\\\
$59[0:Res:20.0,7.0] \ \|\| \ eq(f(U),2) \ , \ eq(f(max(max(max(V,s(U)),W),X)),2)* \ \Rightarrow$\\\\
$69[0:Res:2.0,6.0] \ \|\| \ eq(f(U),1) \ , \ eq(f(s(U)),1)* \ \Rightarrow$\\\\
$70[0:Res:9.0,6.0] \ \|\| \ eq(f(U),1) \ , \ eq(f(max(V,s(U))),1)* \ \Rightarrow$\\\\
$74[0:Res:13.0,6.0] \ \|\| \ eq(f(U),1) \ , \ eq(f(max(V,max(s(U),W))),1)* \ \Rightarrow$\\\\
$79[0:Res:16.0,6.0] \ \|\| \ eq(f(U),1) \ , \ eq(f(max(max(V,s(U)),W)),1)* \ \Rightarrow$\\\\
$89[0:Res:2.0,5.0] \ \|\| \ eq(f(U),0) \ , \ eq(f(s(U)),0)* \ \Rightarrow$\\\\
$90[0:Res:9.0,5.0] \ \|\| \ eq(f(U),0) \ , \ eq(f(max(V,s(U))),0)* \ \Rightarrow$\\\\
$98[0:Res:12.0,5.0] \ \|\| \ eq(f(U),0) \ , \ eq(f(max(s(U),V)),0)* \ \Rightarrow$\\\\
$123[0:Res:1.3,89.1] \ \|\| \ eq(f(U),0) \ \Rightarrow \ eq(f(s(U)),3) \ , \ eq(f(s(U)),2) \ , \ eq(f(s(U)),1)$\\\\
$159[0:Res:54.0,8.0] \ \|\| \ eq(f(U),3) \ , \ eq(f(max(max(max(max(V,s(U)),W),X),Y)),3)* \ \Rightarrow$\\\\
$196[0:Res:1.3,90.1] \ \|\| \ eq(f(U),0) \ \Rightarrow \ eq(f(max(V,s(U))),3) \\\\ eq(f(max(V,s(U))),2) \ , \ eq(f(max(V,s(U))),1)$\\\\
$197[0:Res:1.3,98.1] \ \|\| \ eq(f(U),0) \ \Rightarrow \ eq(f(max(s(U),V)),3) \\\\ eq(f(max(s(U),V)),2) \ , \ eq(f(max(s(U),V)),1)$\\\\
$423[0:Res:196.3,79.1] \ \|\| \ eq(f(U),0) \ , \ eq(f(V),1) \ \Rightarrow \\\\ eq(f(max(max(W,s(V)),s(U))),3) \ , \ eq(f(max(max(W,s(V)),s(U))),2)$\\\\
$450[0:Res:197.3,74.1] \ \|\| \ eq(f(U),0) \ , \ eq(f(V),1) \ \Rightarrow \\\\ eq(f(max(s(U),max(s(V),W))),3) \ , \ eq(f(max(s(U),max(s(V),W))),2)$\\\\
$955[0:Res:423.3,59.1] \ \|\| \ eq(f(U),0) \ , \ eq(f(V),1) \ , \ eq(f(W),2) \\\\ \Rightarrow \ eq(f(max(max(max(X,s(W)),s(V)),s(U))),3)$\\\\
$1009[0:Res:450.3,44.1] \ \|\| \ eq(f(U),0) \ , \ eq(f(V),1) \ , \ eq(f(W),2) \\\\ \Rightarrow \ eq(f(max(s(U),max(s(V),s(W)))),3)$\\\\
$2272[0:Res:955.3,159.1] \ \|\| \ eq(f(U),0)* \ , \ eq(f(V),1)* \ , \ eq(f(W),2)* \\\\ eq(f(X),3)* \ \Rightarrow$\\\\
$2273[0:MRR:1009.3,2272.3] \ \|\| \ eq(f(U),0)*+ \ , \ eq(f(V),1)* \\\\ eq(f(W),2)* \ \Rightarrow$\\\\
$2301[0:MRR:450.3,2273.2] \ \|\| \ eq(f(U),0) \ , \ eq(f(V),1) \\\\ \Rightarrow \ eq(f(max(s(U),max(s(V),W))),3)$\\\\
$2450[0:Res:2301.2,25.1] \ \|\| \ eq(f(U),0)* \ , \ eq(f(V),1)* \\\\ eq(f(W),3)* \ \Rightarrow$\\\\
$2459[0:MRR:2301.2,2450.2] \ \|\| \ eq(f(U),0)*+ \ , \ eq(f(V),1)* \ \Rightarrow$\\\\
$2577[0:MRR:123.3,2459.1] \ \|\| \ eq(f(U),0) \ \Rightarrow \ eq(f(s(U)),3) \\\\ eq(f(s(U)),2)$\\\\
$2578[0:MRR:196.3,2459.1] \ \|\| \ eq(f(U),0) \ \Rightarrow \\ eq(f(max(V,s(U))),3) \ , \ eq(f(max(V,s(U))),2)$\\\\
$2613[0:Res:2578.2,50.1] \ \|\| \ eq(f(U),0) \ , \ eq(f(V),2) \ \Rightarrow \ eq(f(max(s(V),s(U))),3)$\\\\
$2615[0:Res:2578.2,52.1] \ \|\| \ eq(f(U),0) \ , \ eq(f(V),2) \\ \Rightarrow \ eq(f(max(max(W,s(V)),s(U))),3)$\\\\
$2676[0:Res:2615.2,28.1] \ \|\| \ eq(f(U),0)* \ , \ eq(f(V),2)* \ , \ eq(f(W),3)* \ \Rightarrow$\\\\
$2684[0:MRR:2613.2,2676.2] \ \|\| \ eq(f(U),0)*+ \ , \ eq(f(V),2)* \ \Rightarrow$\\\\
$2714[0:MRR:2577.2,2684.1] \ \|\| \ eq(f(U),0) \ \Rightarrow \ eq(f(s(U)),3)$\\\\
$2715[0:MRR:2578.2,2684.1] \ \|\| \ eq(f(U),0) \ \Rightarrow \ eq(f(max(V,s(U))),3)$\\\\
$2749[0:Res:2715.1,27.1] \ \|\| \ eq(f(U),0)* \ , \ eq(f(V),3)* \ \Rightarrow$\\\\
$2764[0:MRR:2714.1,2749.1] \ \|\| \ eq(f(U),0)* \ \Rightarrow $\\\\
$2795[0:MRR:1.3,2764.0] \ \|\|  \ \Rightarrow \ eq(f(U),3) \ , \ eq(f(U),2) \ , \ eq(f(U),1)$\\\\
$2796[0:Res:2795.2,69.1] \ \|\| \ eq(f(U),1) \ \Rightarrow \ eq(f(s(U)),3) \ , \ eq(f(s(U)),2)$\\\\
$2797[0:Res:2795.2,70.1] \ \|\| \ eq(f(U),1) \ \Rightarrow \ eq(f(max(V,s(U))),3) \\ eq(f(max(V,s(U))),2)$\\\\
$2831[0:Res:2797.2,50.1] \ \|\| \ eq(f(U),1) \ , \ eq(f(V),2) \\ \Rightarrow \ eq(f(max(s(V),s(U))),3)$\\\\
$2833[0:Res:2797.2,52.1] \ \|\| \ eq(f(U),1) \ , \ eq(f(V),2) \\ \Rightarrow \ eq(f(max(max(W,s(V)),s(U))),3)$\\\\
$2896[0:Res:2833.2,28.1] \ \|\| \ eq(f(U),1)* \ , \ eq(f(V),2)* \ , \ eq(f(W),3)* \ \Rightarrow$\\\\
$2904[0:MRR:2831.2,2896.2] \ \|\| \ eq(f(U),1)*+ \ , \ eq(f(V),2)* \ \Rightarrow$\\\\
$2934[0:MRR:2796.2,2904.1] \ \|\| \ eq(f(U),1) \ \Rightarrow \ eq(f(s(U)),3)$\\\\
$2935[0:MRR:2797.2,2904.1] \ \|\| \ eq(f(U),1) \ \Rightarrow \ eq(f(max(V,s(U))),3)$\\\\
$2969[0:Res:2935.1,27.1] \ \|\| \ eq(f(U),1)* \ , \ eq(f(V),3)* \ \Rightarrow$\\\\
$2984[0:MRR:2934.1,2969.1] \ \|\| \ eq(f(U),1)* \ \Rightarrow$\\\\
$3015[0:MRR:2795.2,2984.0] \ \|\|  \ \Rightarrow \ eq(f(U),3) \ , \ eq(f(U),2)$\\\\
$3016[0:Res:3015.1,42.1] \ \|\| \ eq(f(U),2) \ \Rightarrow \ eq(f(s(U)),3)$\\\\
$3017[0:Res:3015.1,43.1] \ \|\| \ eq(f(U),2) \ \Rightarrow \ eq(f(max(V,s(U))),3)$\\\\
$3050[0:Res:3017.1,27.1] \ \|\| \ eq(f(U),2)* \ , \ eq(f(V),3)* \ \Rightarrow$\\\\
$3065[0:MRR:3016.1,3050.1] \ \|\| \ eq(f(U),2)* \ \Rightarrow$\\\\
$3096[0:MRR:3015.1,3065.0] \ \|\|  \ \Rightarrow \ eq(f(U),3)$\\\\
$3098[0:MRR:23.1,23.0,3096.0] \ \|\|  \ \Rightarrow$\\\\

\subsection{Growth Rate of Refutation}
\label{growthrate}
\begin{definition}
\index{$Occ(\cdot ,\cdot)$}
Let $Occ(x,r)$ be defined as the number of times the clause $x$ is used in the refutation $r$.
\end{definition}

\begin{theorem}
Let $r$ be the resolution refutation of Thm. \ref{thm:refofC} for the clause set $C(n)$, then $Occ(C5,r)$ is the result of the following recurrence relation $a(n+1) = (n+1)*a(n) + 1$ and $a(0)=1$.
\end{theorem}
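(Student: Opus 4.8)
The plan is to count the leaf occurrences of $(C5)$ in the refutation $r$ by exploiting the fact that $(C5)$ enters the derivation only at the base cases of the nested induction. First I would verify that $(C5)$ is never consumed anywhere else: the clauses of Cor. \ref{cor:first}, \ref{cor:second}, \ref{cor:third} are built from $(C1)$--$(C4_n)$ and Lem. \ref{lem:first} alone, the inductive steps of Lem. \ref{lem:lbase2} resolve earlier clauses only against these corollaries, and every resolution step of Lem. \ref{lem:l13} resolves previously derived $c'_b$-clauses against one another. Hence the sole inference that touches $(C5)$ is the base case $k=0$ of Lem. \ref{lem:lbase2}, which contributes exactly one occurrence of $(C5)$ to each greatest-lower-bound clause $c'_b(k,n)$. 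Consequently $Occ(C5,r)$ equals the number of such base clauses occurring in $r$.

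Next I would read a recurrence off the induction over $(A_n,\lessdot_n)$ of Def. \ref{def:importantordering} that drives Lem. \ref{lem:l13}. Writing $L(k,j)$ for the number of base clauses in the derivation of $c'_b(k,j)$, the step scheme of Lem. \ref{lem:l13}, which produces $c'_b(k,j)$ from $c'_b(k,j+1)$ and $c'_{b'}(k+1,k+1)$, gives $L(k,j)=L(k,j+1)+L(k+1,k+1)$ for $k\le j<n$, while the base clauses themselves give $L(k,n)=1$ (this last equation is exactly the anchor $a(0)=1$). Setting $M(k):=L(k,k)$ and telescoping the first equation from $j$ up to $n$ yields $L(k,j)=1+(n-j)\,M(k+1)$, so the diagonal obeys $M(k)=1+(n-k)\,M(k+1)$ with $M(n)=1$.

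Solving this is then routine. Reindexing by $i:=n-k$ and putting $P(i):=M(n-i)$ converts the diagonal recurrence into $P(i)=i\,P(i-1)+1$ with $P(0)=1$, which is precisely the recurrence $a(n+1)=(n+1)\,a(n)+1$, $a(0)=1$ of the statement. Thus $M(k)=a(n-k)$, and since the occurrences of $(C5)$ in the full refutation are organised by the top clause $c'_b(0,0)\equiv f(x)=\overline{i}\vdash$ of the well-ordering, one obtains $Occ(C5,r)=M(0)=a(n)$. I would close by checking the numeric anchor: the recurrence gives $a(4)=65$, matching the refutation of $C(4)$ reported in Sec. \ref{sec:ATPANAL}.

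The main obstacle is the bookkeeping of \emph{sharing}. A naive unfolding of the induction of Thm. \ref{thm:refofC} suggests that each of the $n+1$ unit clauses $f(x)=\overline{i}\vdash$ is derived independently, which would over-count $(C5)$ by a factor of roughly $n+1$, since each single clause already costs $M(0)=a(n)$ base cases on its own. The delicate point is that the step case of Thm. \ref{thm:refofC} reuses the refutation of $C(n)$ \emph{verbatim} (only appending the literal $f(\alpha)=\overline{n+1}$ throughout), so that passing from $C(n)$ to $C(n+1)$ contributes exactly the genuinely new base clauses needed to derive $f(x)=\overline{n+1}\vdash$, namely $a(n+1)-a(n)=n\,a(n)+1$ of them, rather than a fresh copy for every value of $i$. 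Establishing this identification of shared sub-derivations — equivalently, showing that $r$ realises the recursion $L$ as a DAG and not as its unfolded tree — is where the real work lies; once it is in place, the arithmetic above closes the proof.
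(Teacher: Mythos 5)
Your second paragraph is correct and is actually a cleaner route to the unit-clause cost than the paper's own argument: the paper shows that deleting the never-resolved literal $f(\alpha)=b(0)$ from the derivation of $c'_b(0,0)$ leaves a refutation of a renamed copy of the clause set with one fewer colour, whereas you read the count directly off the induction over $(A_n,\lessdot_n)$ and get $M(0)=a(n)$ with no renaming argument. The gap is in the final assembly. $M(0)$ is the price of \emph{one} unit clause $f(x)=\overline{i}\,\vdash$; the refutation of Thm.~\ref{thm:refofC} needs $n+1$ of them, one for each $i\in\{0,\dots,n\}$, and these are derived by pairwise disjoint subderivations --- every clause $c'_b(k,j)$ used in the derivation of $f(x)=\overline{i}\,\vdash$ carries the literal $f(x_1)=\overline{i}$ on the left, so no clause can be identified across two of these derivations --- after which $(C5)$ is used once more, resolved against all the unit clauses. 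Hence $Occ(C5,r)=(n+1)M(0)+1=(n+1)a(n)+1=a(n+1)$: one more turn of the recurrence than you claim, and exactly what the paper asserts (its base case is $Occ(C5,r_0)=a(1)=2$; indeed already for $C(0)$ the clause $(C5)$ must be used twice, once to produce $f(x)=\overline{0}\,\vdash$ and once to be resolved against it, so your value $a(0)=1$ cannot be right).

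The sharing argument you invoke to suppress the factor $n+1$ is not available. $Occ$ counts uses in the refutation \emph{tree} (this is how the paper counts SPASS's 1806 uses), and the refutation realises no sharing: even on the reused-refutation reading of Thm.~\ref{thm:refofC}, where the refutation of $C(n)$ is replayed with the extra literal $f(\alpha)=\overline{n+1}$ carried along, the clause $f(x)=\overline{n+1}\,\vdash$ must still be derived from scratch by Lem.~\ref{lem:l13}, and by your own formula (applied with parameter $n+1$) that costs $a(n+1)$ base clauses, not $a(n+1)-a(n)$; so that reading gives the increment $Occ(C5,r_{n+1})=Occ(C5,r_n)+a(n+1)$, never your $n\,a(n)+1$. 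Finally, your numeric anchor rests on a misreading of the paper's admittedly loose indexing: the ``$C(4)$'' of Sec.~\ref{sec:ATPANAL} is the instance whose clause $(C5)$ has the four disjuncts $f(U)=0,\dots,f(U)=3$ (see Sec.~\ref{sec:spassreffour}), i.e.\ $C(3)$ in the indexing of the theorem, so the reported count $65=a(4)$ confirms $Occ(C5,r_n)=a(n+1)$ and refutes $Occ(C5,r_n)=a(n)$.
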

\begin{proof}
Let us consider the case for the clause set $C(0)$. This is the case when we have only one symbol in the function's range. If we compute the recurrence we get $a(1) = a(0) +1 = 2$ Now let us assume it holds for all $m\leq n$ and show it hold for $n+1$. In the proof of Lem. \ref{lem:l13}, when  deriving $c'_{b}(0,0)$ the literal $f(\alpha) = b(0)$ is in the antecedent for every clause higher in the resolution derivation and it is  never used in a resolution step . If we remove this clause from the antecedent then we have a resolution refutation for the clause $C(n)$, only if we rename the schematic sort terms accordingly. To refute $C(n+1)$ we need to derive $n+1$ distinct $c'_{b}(0,0)$ clauses and resolve them with a single instance of $(C5)$. Thus, we have the equation, $Occ(C5^{n+1},r_{n+1}) = (n+1)*Occ(C5^{n},r_{n}) +1 $   where $r_{n+1}$ is the  resolution refutation of Thm. \ref{thm:refofC} for the clause set $C(n+1)$ and  $r_{n}$ is the resolution refutation of Thm. \ref{thm:refofC} for the clause set $C(n)$. Thus, the theorem holds by induction. 
\\$\square$ 
\end{proof}

\begin{corollary}
\index{Recurrence Relation}
The recurrence relation  $a(n) = n\cdot a(n-1) + 1$ and $a(0)=1$ is equivalent to the equation:

$$f(n) =n!\cdot \sum_{i=0}^{n} \frac{1}{i!}$$
\end{corollary}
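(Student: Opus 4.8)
The plan is to show by induction on $n$ that the closed form $f(n) = n!\cdot \sum_{i=0}^{n}\frac{1}{i!}$ coincides with the sequence $a(n)$ defined by the recurrence $a(n)=n\cdot a(n-1)+1$, $a(0)=1$. It suffices to check two things: that $f$ agrees with $a$ at the base point, and that $f$ satisfies the same recurrence as $a$; a one-step induction then forces $f(n)=a(n)$ for every $n$.

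For the base case I would evaluate $f(0) = 0!\cdot \sum_{i=0}^{0}\frac{1}{i!} = 1\cdot 1 = 1 = a(0)$, so the two sequences agree at $n=0$. For the inductive step the key computational move is to peel off the top term of the summation. Concretely, I would compute
\begin{equation*}
f(n) = n!\left(\sum_{i=0}^{n-1}\frac{1}{i!} + \frac{1}{n!}\right) = n!\sum_{i=0}^{n-1}\frac{1}{i!} + \frac{n!}{n!} = n\cdot (n-1)!\sum_{i=0}^{n-1}\frac{1}{i!} + 1 = n\cdot f(n-1) + 1.
\end{equation*}
The single point that makes the argument work is the cancellation $n!\cdot \tfrac{1}{n!}=1$, which is exactly what produces the additive constant $+1$ appearing in the recurrence; the remaining factor $n!=n\cdot(n-1)!$ reassembles $n\cdot f(n-1)$.

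Having shown that $f$ satisfies both $f(0)=1$ and $f(n)=n\cdot f(n-1)+1$, I would invoke the uniqueness of the solution to a first-order recurrence with a prescribed initial value to conclude $f(n)=a(n)$ for all $n\geq 0$. There is no genuine obstacle here: the statement is a routine verification, and the only place to exercise care is the index bookkeeping in splitting the sum $\sum_{i=0}^{n}$ into $\sum_{i=0}^{n-1}$ plus its final term, together with the observation that the manipulation requires $n\geq 1$ so that the base case must be handled separately.
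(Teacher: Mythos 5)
Your proof is correct, but it runs in the opposite direction from the paper's. You take the closed form as given and verify that it satisfies the same first-order recurrence and initial condition ($f(0)=1$ and $f(n)=n\cdot f(n-1)+1$, obtained by splitting off the $i=n$ term of the sum), then appeal to uniqueness of solutions of the recurrence; this is a clean one-step induction. The paper instead \emph{derives} the closed form by unrolling the recurrence: it substitutes the recurrence into itself to get $a(n)=n(n-1)\,a(n-2)+\frac{n!}{(n-1)!}+\frac{n!}{n!}$, states the general $k$-fold unrolling $a(n)=\bigl(\prod_{i=n-k+1}^{n} i\bigr)\,a(n-k)+\sum_{i=n-k+1}^{n}\frac{n!}{i!}$, and then sets $k=n$, so that $a(0)=1$ is absorbed into the sum as the $\frac{n!}{0!}$ term. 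The trade-off is this: the paper's unrolling explains where the formula comes from, with no need to guess it in advance, but its $k$-step formula is itself only asserted from the pattern and would strictly require its own induction on $k$; your verification needs the answer handed to you --- which the corollary statement does --- and in exchange is logically airtight, with nothing left implicit. Both arguments ultimately hinge on the same cancellation $n!\cdot\frac{1}{n!}=1$ producing the additive $+1$.
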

\begin{proof}
If we unroll the relation one we get,
$$ a(n) =  n\cdot (n-1)\cdot a(n-2) + n + 1  =  n\cdot (n-1)\cdot a(n-2) + \frac{n!}{(n-1)!} + \frac{n!}{n!}$$

Thus, unrolling the function $k$ times results in the following:

$$ a(n) = \left(  \prod^{n}_{i=n-k+1} i\right) \cdot a(n-k) + \sum^{n}_{i=n-k+1} \frac{n!}{i!} $$

Now when we set $k=n$ we get, 

$$ a(n) = n! + \sum^{n}_{i=1} \frac{n!}{i!} = \frac{n!}{0!}+ \sum^{n}_{i=1} \frac{n!}{i!}  = \sum^{n}_{i=0} \frac{n!}{i!}  $$ \\ $\square$
\end{proof}

\end{document}